\documentclass[11pt,reqno,letterpaper]{amsart}
\usepackage{color}
\usepackage[colorlinks=true, allcolors=blue,backref=page]{hyperref}
\usepackage{amsmath, amssymb, amsthm}
\usepackage{mathrsfs}
\usepackage{mathtools}
\usepackage[noabbrev,capitalize,nameinlink]{cleveref}
\usepackage{fullpage}
\usepackage[noadjust]{cite}
\usepackage{graphics}
\usepackage{pifont}
\usepackage{tikz}
\usepackage{bbm}
\usepackage{float} 
\usepackage[T1]{fontenc}

\usetikzlibrary{arrows.meta}

\usepackage{environ}
\usepackage{framed}
\usepackage{url}
\usepackage[linesnumbered,ruled,vlined]{algorithm2e}
\usepackage[noend]{algpseudocode}
\usepackage[labelfont=bf]{caption}
\usepackage{cite}
\usepackage{framed}
\usepackage[framemethod=tikz]{mdframed}
\usepackage{appendix}
\usepackage{graphicx}
\usepackage[textsize=tiny]{todonotes}
\usepackage{tcolorbox}
\usepackage{enumerate}
\allowdisplaybreaks[1]
\usepackage{enumerate}
\usepackage{stmaryrd}
\usepackage{wasysym}
\usepackage{MnSymbol}
\usepackage[margin=1in]{geometry}

\usepackage[shortlabels]{enumitem}
\crefformat{enumi}{#2#1#3}
\crefrangeformat{enumi}{#3#1#4 to~#5#2#6}
\crefmultiformat{enumi}{#2#1#3}
{ and~#2#1#3}{, #2#1#3}{ and~#2#1#3}

\DeclareSymbolFont{symbolsC}{U}{pxsyc}{m}{n}
\SetSymbolFont{symbolsC}{bold}{U}{pxsyc}{bx}{n}
\DeclareFontSubstitution{U}{pxsyc}{m}{n}
\DeclareMathSymbol{\medcircle}{\mathbin}{symbolsC}{7}

\crefname{equation}{}{} 
\AtBeginEnvironment{appendices}{\crefalias{section}{appendix}} 

\usepackage[color,final]{showkeys} 

\colorlet{refkey}{orange!20}
\colorlet{labelkey}{blue!30}

\numberwithin{equation}{section}
\newtheorem{theorem}{Theorem}[section]
\newtheorem{proposition}[theorem]{Proposition}
\newtheorem{lemma}[theorem]{Lemma}

\newtheorem{corollary}[theorem]{Corollary}

\newtheorem*{question*}{Question}

\theoremstyle{definition}
\newtheorem{definition}[theorem]{Definition}

\newtheorem*{definition*}{Definition}

\theoremstyle{remark}

\newcommand{\mb}{\mathbb}
\newcommand{\mbf}{\mathbf}

\newcommand{\mc}{\mathcal}

\newcommand{\on}{\operatorname}

\let\originalleft\left
\let\originalright\right
\renewcommand{\left}{\mathopen{}\mathclose\bgroup\originalleft}
\renewcommand{\right}{\aftergroup\egroup\originalright}

\allowdisplaybreaks

\newif\ifpublic
\publictrue

\ifpublic

\newcommand{\ignore}[1]{}

\title{Optimal Quantum Speedups for Repeatedly Nested Expectation Estimation}

\author[A1]{Yihang Sun}
\address{Department of Management Science and Engineering, Stanford University}
\email{kimisun@stanford.edu}

\author[A2]{Guanyang Wang}
\address{Department of Statistics, Rutgers University}
\email{guanyang.wang@rutgers.edu}

\author[A3]{Jose Blanchet}
\address{Department of Management Science and Engineering, Stanford University}
\email{jose.blanchet@stanford.edu}
\begin{document}
\maketitle
\begin{abstract}
We study the estimation of repeatedly nested expectations (RNEs) with a constant horizon (number of nestings) using quantum computing. We propose a quantum algorithm that achieves $\varepsilon$-error with cost $\tilde O(\varepsilon^{-1})$, up to logarithmic factors. Standard lower bounds show this scaling is essentially optimal, yielding an almost quadratic speedup over the best classical algorithm. Our results extend prior quantum speedups for single nested expectations to repeated nesting, and therefore cover a broader range of applications, including optimal stopping. This extension requires a new derandomized variant of the classical randomized Multilevel Monte Carlo (rMLMC) algorithm. Careful de-randomization is key to overcoming a variable-time issue that typically increases quantized versions of classical randomized algorithms.
\end{abstract}

\section{Introduction}
Optimal stopping problems appear ubiquitously in statistics, sequential decision-making, and mathematical finance. Concretely, consider a finite-horizon, discrete-time process $(y_1,\ldots,y_T)$ that users can simulate, the objective is to numerically compute the ``optimal utility''
\[
V_0 \;\coloneqq \; \sup_{\tau\in \mc T_T}\, \mb E\left[f(y_\tau)\right],
\]
where $\mc T_T$ is the set of stopping times (strategies) taking values in $\{1,2,\ldots,T\}$.
By standard optimal stopping theory, we define the Snell envelope as
\[
V_k \;\coloneqq \; \sup_{\tau\in \mc T_{k,T}} \mb E\left[f(y_\tau)\mid \mc F_k\right], 
\qquad k=0,\ldots,T,
\]
where $\mc T_{k,T}$ denotes stopping times $\tau$ satisfying $k\le \tau\le T$, and $\mc F_k$ is the natural filtration. Then  $V_k$ satisfies the dynamic programming recursion
\[
V_T = f(y_T), \qquad 
V_k = \mb E\left[\max\{f(y_k),\, V_{k+1}\}\mid \mc F_k\right],\qquad  k\leq T-1.
\]

Optimal stopping is a canonical example of \textit{repeatedly nested expectations} (RNEs) \cite{rne,nmc}. Beyond optimal stopping, RNEs also arise in applications such as risk estimation for credit valuation adjustment and inference in probabilistic programs; see \cite{rne} for further examples.

In the classical setting where the number of nestings (the horizon) is fixed, \cite{rne} proposes a Monte Carlo algorithm with $O(\varepsilon^{-2})$ sample complexity to produce an estimator with mean-squared error $\varepsilon^2$ under smoothness assumptions. The algorithm is developed within the randomized Multilevel Monte Carlo (rMLMC) framework \cite{rmlmc}. The $O(\varepsilon^{-2})$ scaling is optimal in its dependence on $\varepsilon$ among all classical Monte Carlo estimators.

This paper studies the estimation of RNEs using quantum computing. Our main contribution is a new quantum algorithm that estimates an RNE with cost $\tilde O(\varepsilon^{-1})$, where $\tilde O(\cdot)$ ignores logarithmic factors and the dependence on other problem parameters. Therefore, our quantum algorithm achieves an almost quadratic speedup over the optimal classical algorithm. Moreover, standard quantum lower bounds show that this complexity is optimal up to lower-order terms.

Our results extend both the scope and the technical depth of existing work. \cite{qmlmc} develops an $\tilde O(\varepsilon^{-1})$ method for estimating a single nested expectation. We extend this to multiple repeated nestings using new technical tools described below. This generalization captures a much wider range of applications, including optimal stopping.

From a technical perspective, we develop a new \emph{derandomized} version of the classical rMLMC algorithm used in \cite{rne}, rather than directly quantizing it. A direct quantization strategy would not lead to a quadratic speedup. The key issue is that rMLMC is inherently a \emph{variable-time} algorithm: its random truncation level induces a random runtime across executions. In the quantum setting, such runtime variability interacts poorly with amplitude amplification and can eliminate the intended quadratic speedup; see, e.g., \cite{variabletime}. We therefore take a detour: we first derandomize rMLMC by replacing the random runtime with a controlled, deterministic schedule, while maintaining the desired mean-squared error guarantee. We then quantize this derandomized procedure using standard quantum mean-estimation subroutines, yielding $\tilde O(\varepsilon^{-1})$ complexity. 

\subsection{Related Works}
Our algorithm relies on the quantum mean estimation subroutine of \cite{qamc, qamccode}, built on Grover’s algorithm \cite{grover}. Related quantum methods also exist. For example, \cite{qamlmc} proposes a quantum-accelerated multilevel Monte Carlo (MLMC) method, but it does not apply in our setting because our target quantity has a repeatedly nested structure rather than a single telescoping MLMC decomposition. \cite{qoptimalstop} develops a quantum version of the least-squares Monte Carlo (LSM) method \cite{lsm} for optimal stopping. Since LSM approach uses linear regression to approximate  the continuation value function, its overall error includes an approximation component that depends on the chosen linear model and can only be bounded under additional function-approximation assumptions. In contrast, our algorithm is essentially non-parametric and targets the general RNE problem, with optimal stopping as a special case.

\subsection{Setup}
Suppose we have an underlying discrete process $y=(y_0, \dots, y_D)\sim\pi$ for a constant horizon $D\in \mb{N}$. Define notation $y_{\le d} = (y_0, \dots, y_d)$ and $y_{<d} = (y_0, \dots, y_{d-1})$.

Given functions $g_d:\mb{R}^{d+2}\to \mb{R}$ for each $0\le d\le D-1$ and $g_D:\mb{R}^{D+1}\to \mb{R}$, we want to estimate
\begin{equation}
\label{eq:gamma}
\gamma_d (y_{<d}) \coloneqq  \mb{E}_{y_d} \left[g_d\left(y_{\le d}, \gamma_{d+1}(y_{\le d})\right)\vert y_{<d}\right]
\end{equation}
for $0\le d< D$ and define $\gamma_D (y_{<D})=\mb{E}_{y_D} \left[ g_D(y)\vert y_{<D}\right]$.

\begin{definition}
\label{def:lbl}
We assume the \emph{last-component bounded Lipschitz}
condition (LBL) case, that
\begin{enumerate}
	\item We can simulate one step of the trajectory at cost $1$ and evaluate functions at cost $0$.
	\item For each $d$, we are given $L_d$ such that $ z\mapsto g_d(y_{<d}, z) $ is $L_d$-Lipschitz for $\pi$-almost every $y_{<d}$.
	\item $g_D\in L^2(\pi)$, i.e. $\mb{E}_{\pi} |g_D(y)|^2<\infty$
\end{enumerate}
\end{definition}
This in particular covers the case in \cite{muse} where $g_d(y_{\le d}, z) = \max\{f(y_{\le d}), z\}$ for a Lipschitz reward function $f$.
We henceforth treat $d, D, L_0, \dots, L_D$ as constants in our asymptotic notation as $\varepsilon \to 0$.

In this setting, \cite{rne} used a classic randomized Multilevel Monte Carlo (rMLMC) method to obtain the following guarantees on the recursive Algorithm 2.

\begin{theorem}[Theorem 2.4 in~{\cite{rne}}]
\label{thm:algo-1}
Under LBL assumptions (\cref{def:lbl}) and any $\delta \in (0, 1/2)$, for each $0\le d\le D$, there exists $r_d\in (0, 1/2)$ such that given $\pi$-a.e. history $y_{<d}$ and $\mb{P}_d=\on{Geo}(r_d)$, Algorithm 1 satisfies
\begin{enumerate}
\item Unbiased: the expected output is $\gamma_d(y_{<d})$
\item Cost: the expected sample complexity is $O(1)$
\item Moment: the output has $p_d$-th moment at most $O(1)$ where $p_d\coloneqq 2-\delta 2^{-d}$
\end{enumerate}
\end{theorem}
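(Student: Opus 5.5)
We prove the three properties jointly by backward induction on $d$, from $d=D$ down to $d=0$. Here Algorithm 1 at level $d$ is the standard rMLMC estimator. It draws $y_d$ given $y_{<d}$ and $N\sim\mb P_d=\on{Geo}(r_d)$ with $p_N:=\mb P(N=n)=r_d(1-r_d)^n$. It then calls the level-$(d+1)$ algorithm independently $2^{N+1}$ times, producing i.i.d. outputs $Z_1,\dots,Z_{2^{N+1}}$ that are unbiased for $\gamma_{d+1}(y_{\le d})$. Finally it outputs
\[
\hat\gamma_d = g_d(y_{\le d},Z_1) + \frac{\Delta_N}{p_N},
\]
where $\Delta_n$ is the antithetic difference
\[
\Delta_n = g_d\bigl(y_{\le d},\bar Z_{2^{n+1}}\bigr)-\tfrac12\Bigl[g_d\bigl(y_{\le d},\bar Z^{(1)}_{2^n}\bigr)+g_d\bigl(y_{\le d},\bar Z^{(2)}_{2^n}\bigr)\Bigr],
\]
with $\bar Z^{(1)}_{2^n},\bar Z^{(2)}_{2^n}$ the averages of the two halves and $\bar Z_{2^{n+1}}$ their full average.

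The base case $d=D$ is trivial. The estimator is $g_D(y)$ with $y_D$ sampled once, so it is unbiased, has cost $1$, and its $p_D$-th moment is bounded by its $L^2$ norm, since $p_D<2$ and $g_D\in L^2(\pi)$.

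For the inductive step, assume that the level-$(d+1)$ estimator is unbiased, has $O(1)$ expected cost, and has $p_{d+1}$-th moment $O(1)$ uniformly in $\pi$-a.e. history.

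\textbf{Unbiasedness.} The sum $\mb E[g_d(y_{\le d},Z_1)]+\sum_n \mb E[\Delta_n]$ telescopes to $\lim_n \mb E\, g_d(y_{\le d},\bar Z_{2^n})$, using that the two half-averages are equal in law to a single $2^n$-average. This limit equals $g_d(y_{\le d},\gamma_{d+1})$. Indeed, the Lipschitz bound gives $|g_d(\cdot,\bar Z_m)-g_d(\cdot,\gamma_{d+1})|\le L_d|\bar Z_m-\gamma_{d+1}|$. Since $p_{d+1}>1$, the Marcinkiewicz--Zygmund/von Bahr--Esseen inequality yields $\mb E|\bar Z_m-\gamma_{d+1}|^{p_{d+1}}\lesssim m^{1-p_{d+1}}\to0$. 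Exchanging the sum and the expectation is justified by Fubini, given the moment bound below. Taking the expectation over $y_d$ then gives $\gamma_d(y_{<d})$.

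\textbf{Cost.} The expected number of inner calls is $\sum_n p_n 2^{n+1}$, which is finite iff $r_d>1/2$. With the parametrization $p_n=r_d(1-r_d)^n$ this requires $1-r_d<1/2$. We therefore read $\on{Geo}(r_d)$ as $p_n\propto 2^{-(1+r_d)n}$, i.e. success probability $1-2^{-(1+r_d)}$, and choose $r_d\in(0,1/2)$ accordingly. The expected cost is then $1+\sum_n 2^{-(1+r_d)n}2^{n+1}\cdot O(1)=O(1)$ by the inductive cost bound and Wald's identity.

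\textbf{Moment (the main obstacle).} We must bound
\[
\mb E\Bigl|\frac{\Delta_N}{p_N}\Bigr|^{p_d}=\sum_n p_n^{1-p_d}\,\mb E|\Delta_n|^{p_d}.
\]
The antithetic structure only lets us use the Lipschitz property: $|\Delta_n|\le \tfrac{L_d}{2}\bigl|\bar Z^{(1)}_{2^n}-\bar Z^{(2)}_{2^n}\bigr|$. Applying von Bahr--Esseen with exponent $p_{d+1}$ to this difference of independent means gives
\[
\mb E|\Delta_n|^{p_{d+1}}\lesssim 2^{-n(p_{d+1}-1)}.
\]
By Jensen, since $p_d<p_{d+1}$,
\[
\mb E|\Delta_n|^{p_d}\lesssim 2^{-n(p_{d+1}-1)p_d/p_{d+1}}.
\]
The series thus behaves like $\sum_n 2^{n[(1+r_d)(p_d-1)-(p_{d+1}-1)p_d/p_{d+1}]}$, which converges iff
\[
(1+r_d)(p_d-1) < \frac{(p_{d+1}-1)\,p_d}{p_{d+1}}.
\]
With $p_d=2-2\eta$ and $p_{d+1}=2-\eta$, where $\eta=\delta 2^{-d-1}$, the right-hand side is $(1-\eta)(2-2\eta)/(2-\eta)$ and the left-hand side is $(1+r_d)(1-2\eta)$. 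At $r_d=0$ the inequality reduces to $(1-2\eta)(2-\eta)<(1-\eta)(2-2\eta)$, i.e. $2-5\eta+2\eta^2<2-4\eta+2\eta^2$, which holds. By continuity, some small $r_d\in(0,1/2)$ depending only on $\delta$ and $d$ works.

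The term $g_d(y_{\le d},Z_1)$ has bounded $p_d$-th moment by the bound $|g_d(\cdot,z)|\le |g_d(\cdot,\gamma_{d+1})|+L_d|z-\gamma_{d+1}|$ together with the inductive moment bound. This requires that $g_d(y_{\le d},\gamma_{d+1}(y_{\le d}))$ be in $L^{p_d}$, which I expect follows from the LBL setup by a similar induction from $g_D\in L^2$.

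The delicate points are the exponent bookkeeping above, which is precisely why $p_d$ halves its deficit from $2$ at each level, and the uniformity of constants over $\pi$-a.e. histories. The latter I expect to require stating the moment bounds conditionally and integrating at the end.
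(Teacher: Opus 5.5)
Your argument is correct in its main line, but it takes a genuinely different route from the paper.

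\textbf{What the paper does.} The paper never analyzes Algorithm 1 directly. It obtains this theorem as the geometric case of \cref{prop:algo-2}, i.e.\ for Algorithm 2. There the level-$n$ term is a plain difference $g_d(y_{\le d},R_{d+1}(y_{\le d},2^{-n/2}))-g_d(y_{\le d},R_{d+1}(y_{\le d},2^{-(n-1)/2}))$ of two inner estimators called with an accuracy parameter. Its inductive hypothesis is therefore an accuracy--cost trade-off: $L^{p_{d+1}}$-error at most $2\varepsilon'$ at cost polynomial in $1/\varepsilon'$. The decay $\|\Delta_d(y_{\le d},n)\|_{p_d}\lesssim 2^{-n/2}$ is then read off from the hypothesis. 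Von Bahr--Esseen is applied to the outer average over $M_d$ trials, and $r_d$ comes from \cref{lem:numerics} by equating the cost and moment ratios.

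\textbf{What you do differently.} You keep the inner hypothesis minimal: unbiased, $O(1)$ cost, bounded $p_{d+1}$-th moment. You then produce the decay $\|\Delta_n\|_{p_{d+1}}\lesssim 2^{-n(1-1/p_{d+1})}$ yourself, by von Bahr--Esseen on the $2^n$ inner outputs, before dropping to $p_d$ by Jensen. This buys three things:
\begin{itemize}
\item it proves the statement for the algorithm the statement actually names;
\item it gives the explicit window $0<r_d<\eta/((2-\eta)(1-2\eta))$, with your $\eta=\delta 2^{-d-1}$;
\item it exposes the mechanism. At $r_d=0$ your convergence condition is exactly $p_d<p_{d+1}$, so any strictly increasing choice of exponents works. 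Halving the deficit is just one convenient, $D$-independent way to keep every $p_d\in(1,2)$.
\end{itemize}
The paper's formulation pays off elsewhere: the error parameter is what later allows truncation with controlled bias and deterministic level scheduling. Your remark on the parametrization is also justified. The paper's own computation needs $(1-r_d)2^{1+\delta/2^{d+1}}<1$, i.e.\ $r_d>1/2$ when $\mb{P}_d(n)=r_d(1-r_d)^n$, so ``$r_d\in(0,1/2)$'' only makes sense after a reparametrization like yours.

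\textbf{Two points need fixing.}

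First, $g_d(y_{\le d},\gamma_{d+1}(y_{\le d}))\in L^{p_d}$ does \emph{not} follow from \cref{def:lbl} by induction. Lipschitz continuity in $z$ says nothing about the size of $g_d$. Take $g_d(y_{\le d},z)=z+h(y_d)$ with $h(y_d)\in L^1\setminus L^{p_d}$. Every antithetic difference then vanishes, and the output is $Z_1+h(y_d)$, whose $p_d$-th moment (raw or centered) is infinite. So item 3 requires an added integrability hypothesis of this kind, presumably part of the LBL condition in the cited source. Granting it, your Minkowski step closes the argument. The paper's proof avoids the issue only because every term it bounds, including $n=0$, is written as a difference of two $g_d$-values. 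That is at odds with its own telescoping identity.

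Second, your inductive hypothesis ``uniformly in $\pi$-a.e.\ history'' already fails at $d=D$. LBL(3) gives $\mb{E}[g_D^2\mid y_{<D}]<\infty$ only almost surely, not uniformly. Run the induction with integrated moments $\mb{E}|Z-\gamma_{d+1}(y_{\le d})|^{p_{d+1}}$ instead. The von Bahr--Esseen bound conditional on $y_{\le d}$ is linear in the conditional moment, so it integrates. Jensen can then be applied unconditionally, and nothing else in your argument changes.
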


\begin{algorithm}
\DontPrintSemicolon
\caption{Algorithm for RNE in \cite{rne}}
\label{alg:rmlmc1}
 \KwIn{Time step $d\in\{0, 1, \dots, D\}$, history $y_{<d}\in \mb{R}^{d}$, and distribution $\mb{P}_d$ on $\mb{N}$}
 \KwOut{Single sample estimator of $\gamma_d(y_{<d})$}
 Sample $y_d\sim \pi \mid y_{<d}$  

\If{$d=D$}{
    \Return $g_D(y_{\le D})$
}
\Else{
Sample $N\sim \mb{P}_d$

Recursively call the algorithm to estimate $\gamma_{d+1}(y_{\le d})$ for $2^N$ times; label the outputs $X_i$  

\Return $\Delta_{d}^{(N)}/\mb{P}_d(N)$ where
\[
	    	\Delta_d^{(n)} \coloneqq g_d\left(y_{\le d}, 2^{-n}\sum_{i=1}^{2^{n}}X_i\right) -\frac{1}{2}g_d\left(y_{\le d}, 2^{1-n}\sum_{\text{odd }i=1}^{2^n}X_i\right)
            -\frac{1}{2}g_d\left(y_{\le d}, 2^{1-n}\sum_{\text{even } i=1}^{2^n}X_i\right)
	\]
	}
\end{algorithm}
By calling Algorithm 1 repeatedly on the same history $y_{<d}$ and taking the mean of the estimates, \cite{rne} bootstraps \cref{thm:algo-1} to obtain the following bound.
\begin{corollary}[Corollary 2.5 in~{\cite{rne}}]\label{cor:main-rne}
For any $\varepsilon>0$, there is a rMLMC algorithm that estimates $\gamma_d(y_{<d})$ with $L^{p_d}$-error at most $\varepsilon$ and sample complexity $O(\varepsilon^{-2(1+\delta/2^{d})})$ as $\varepsilon\to 0$.
\end{corollary}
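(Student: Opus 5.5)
The plan is to bootstrap \cref{thm:algo-1} by plain averaging and then control the $L^{p_d}$-error with a moment inequality for sums of i.i.d.\ mean-zero variables whose exponent lies in $(1,2]$. Fix $d$ and a $\pi$-a.e.\ history $y_{<d}$. Run Algorithm~\ref{alg:rmlmc1} independently $M$ times on this history with $\mb P_d=\on{Geo}(r_d)$, obtaining i.i.d.\ outputs $Z_1,\dots,Z_M$. The estimator is the sample mean $\bar Z=M^{-1}\sum_{j=1}^M Z_j$.

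By \cref{thm:algo-1}(1), each $Z_j$ is unbiased, so $\bar Z-\gamma_d(y_{<d})=M^{-1}\sum_j W_j$ with $W_j\coloneqq Z_j-\gamma_d(y_{<d})$ i.i.d.\ and mean zero. By \cref{thm:algo-1}(3) and Minkowski's inequality, $\mb E|W_j|^{p}=O(1)$, where $p=p_d\in(1,2)$. To bound the sum I will use the von Bahr--Esseen inequality for independent mean-zero variables with $1\le p\le 2$:
\[
\mb E\Bigl|\sum_{j=1}^M W_j\Bigr|^p\le 2\sum_{j=1}^M \mb E|W_j|^p = O(M).
\]
As an alternative route, one can apply Marcinkiewicz--Zygmund/Rosenthal and then use $(\sum W_j^2)^{p/2}\le \sum |W_j|^p$, which is valid because $p/2\le1$.

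It follows that $\|\bar Z-\gamma_d\|_{L^{p}}\le C M^{1/p-1}$. Setting $C M^{-(1-1/p)}\le\varepsilon$ requires $M\asymp \varepsilon^{-p/(p-1)}$. With $p=2-\eta$ and $\eta=\delta 2^{-d}$,
\[
\frac{p}{p-1}=\frac{2-\eta}{1-\eta}=2+\frac{\eta}{1-\eta}.
\]
By \cref{thm:algo-1}(2) the expected cost is $O(M)$. This gives complexity $O(\varepsilon^{-2-\eta/(1-\eta)})$. The exponent is $2(1+\delta 2^{-d})$ up to the factor $1/(1-\eta)\le 2$, which is harmless after reparametrizing $\delta$: $\delta$ is arbitrary in $(0,1/2)$, so running the construction with $\delta/2$ in place of $\delta$ gives $\eta/(1-\eta)\le 2\delta 2^{-d}$. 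Since $p_d$ then increases, the $L^{p_d}$ bound for the original $p_d$ still holds by monotonicity of $L^p$ norms. The statement's sample complexity is an expected count, so no deterministic bound is needed.

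The main obstacle is the fact that the error exponent is only $p_d<2$. Chebyshev/variance arguments do not apply, because the variance may be infinite. The crux is therefore invoking the correct $p<2$ moment inequality, namely von Bahr--Esseen. Everything else is exponent bookkeeping.
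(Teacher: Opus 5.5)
Your argument is correct; it is exactly the bootstrap the paper attributes to \cite{rne} just before the statement: average $M$ independent runs of Algorithm~1 on the same history and bound the $L^{p_d}$-error of the mean by von Bahr--Esseen, which gives $M\asymp\varepsilon^{-p_d/(p_d-1)}$.

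The paper's own justification is organized differently. It reads the statement off the geometric case of \cref{prop:algo-2}. There, averaging and an accuracy parameter are pushed into every level of the recursion: the inner calls are error-parametrized estimators $R_{d+1}(\cdot,2^{-n/2})$ rather than $2^N$ single-sample calls. Unbiasedness, cost and moment are then re-proved by backward induction. Only its last step, $M_d\gtrsim\varepsilon^{-p_d/(p_d-1)}$ via \cref{thm:vbe}, coincides with yours.

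Your route is shorter and uses only the three properties in \cref{thm:algo-1}. It also avoids the joint tuning of $r_d$ and $M_d$ that the paper needs to make the cost and moment series converge simultaneously. That tuning is genuinely delicate. In \cref{sec:app-algo-2} the factor $2^{-p_d/2}$ should be $2^{-1+\delta/2^{d+1}}$, not $2^{-1-\delta/2^{d+1}}$. With the correct sign, the moment ratio at the $r_d$ of \cref{lem:numerics} is $(4(1-r_d))^{\delta/2^{d+1}}>1$, so that series diverges unless $r_d$ is re-chosen. What the heavier route buys is the error-parametrized inductive structure on which the truncated, derandomized (\cref{prop:algo-3}) and quantum algorithms are built.

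One small point: your $\delta\to\delta/2$ reparametrization is unnecessary. Since $\eta=\delta2^{-d}\le\delta<1/2$, you have $\eta/(1-\eta)\le 2\eta$, hence $2+\eta/(1-\eta)\le 2(1+\delta2^{-d})$ directly. Such a substitution (together with $\varepsilon\to\varepsilon/2$) is needed only on the paper's route, because \cref{prop:algo-2} is stated with exponent $2(1+\delta/2^{d-1})$ and error $2\varepsilon$.
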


\subsection{Main Results and Organization}
\label{sec:main-results}
We provide a series results with the same shape as \cref{thm:algo-1} and \cref{cor:main-rne} in both the classical (\cref{sec:classical}) and quantum (\cref{sec:quantum}) settings.

First, we will truncate the random level $N$ in Algorithm 1 at the cost of a small bias we control, showing the bounds in \cref{thm:algo-1} hold for Algorithm 1 with a truncated geometric distribution $\mb{P}_d$. The level at which we truncate is $B_d=\Theta(\log (\varepsilon^{-1}))$. This is given as \cref{prop:algo-2}.

Next, we replace the random level with a natural deterministic schedule, and obtain the same guarantees. The random truncation version in \cref{prop:algo-2} is crucial bridge for this step. 
This answers a question of \cite{rne} and is our main contribution on the classical-side.

\begin{theorem}\label{thm:main-derand}
Under LBL assumptions (\cref{def:lbl}) and any $\delta \in (0, 1/2)$, there exists a MLMC algorithm for each $0\le d\le D$ with deterministic level scheduling that, given $\pi$-a.e. history $y_{<d}$ and $\varepsilon>0$, estimates $\gamma_d(y_{<d})$ with $L^{p_d}$-error at most $\varepsilon$ where $p_d\coloneqq 2-\delta/2^d$ and sample complexity $O(\varepsilon^{-2(1+\delta/2^{d-1})})$ as $\varepsilon\to 0$.
\end{theorem}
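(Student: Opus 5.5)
We derandomize the truncated estimator level by level, using the truncated-geometric version (\cref{prop:algo-2}) as the bridge. Fix the truncation level $B_d=\Theta(\log\varepsilon^{-1})$ and the truncated geometric law $\mb P_d$ on $\{0,\dots,B_d\}$. With these, a single call of Algorithm 1 has mean within $O(\varepsilon)$ of $\gamma_d(y_{<d})$, $O(1)$ expected cost and $O(1)$ $p_d$-th moment.

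The key observation is that averaging $M$ independent copies of $\Delta_d^{(N)}/\mb P_d(N)$ with $N\sim\mb P_d$ can be replaced by a stratified, deterministic schedule. For each level $n\le B_d$ we draw $M_n\coloneqq\lceil M\,\mb P_d(n)\rceil$ independent copies of $\Delta_d^{(n)}$. The estimator is
\[
\hat\gamma_d \coloneqq \sum_{n=0}^{B_d}\frac{1}{M_n}\sum_{j=1}^{M_n}\Delta_{d,j}^{(n)},
\]
where the inner estimates $X_i$ at level $d+1$ are produced recursively by the same deterministic scheme at accuracy $\varepsilon_{d+1}$. The mean of $\hat\gamma_d$ is exactly $\sum_{n\le B_d}\mb E\Delta_d^{(n)}$, which is the mean of the truncated randomized estimator up to the inner-level bias. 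Its cost is deterministic, equal to $\sum_n M_n 2^n\cdot\mathrm{cost}_{d+1}$.

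Error analysis in $L^{p_d}$ with $p_d\in(1,2)$ proceeds in three pieces.

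\emph{Bias.} Truncation contributes $O(\varepsilon)$ by \cref{prop:algo-2}. The recursive approximation error propagates through the $L_d$-Lipschitz map $z\mapsto g_d(y_{\le d},z)$. Since $\Delta_d^{(n)}$ is a Lipschitz combination of averages of the $X_i$, a perturbation of the inner means by $\eta$ shifts each $\mb E\Delta_d^{(n)}$ by $O(L_d\eta)$. Summed over $B_d+1$ levels this gives $O(B_d L_d\,\varepsilon_{d+1})$, so we take $\varepsilon_{d+1}=\varepsilon/\log(\varepsilon^{-1})$. This logarithmic loss is absorbed in the stated exponent.

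\emph{Fluctuation.} For each level we apply the von Bahr--Esseen inequality for independent centered summands in $L^{p}$, $1\le p\le2$. It gives
\[
\Bigl\|\frac1{M_n}\sum_j(\Delta^{(n)}_{d,j}-\mb E\Delta_d^{(n)})\Bigr\|_{p_d}\lesssim M_n^{1/p_d-1}\,\|\Delta_d^{(n)}\|_{p_d}.
\]
The bound $\|\Delta_d^{(n)}\|_{p_d}\lesssim 2^{-\beta n}$ follows from the proof of \cref{thm:algo-1}. The Lipschitz antithetic difference is controlled by the $L^{p_{d+1}}$ deviation of half-averages of $2^{n}$ inner estimates, and $p_{d+1}>p_d$ leaves slack. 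Summing over $n$ with $M_n\asymp M r_d(1-r_d)^n$ then requires the geometric series $\sum_n M_n^{1/p_d-1}2^{-\beta n}$ to be $O(M^{1/p_d-1})$. This holds because $r_d$ was chosen in \cref{thm:algo-1} so that $\sum_n \mb P_d(n)^{1-p_d}\|\Delta^{(n)}\|^{p_d}<\infty$. Stratification only reduces the variance relative to random level selection: the randomized estimator's error contains this term plus a level-multinomial fluctuation term.

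\emph{Cost.} The total cost is $M\sum_n\mb P_d(n)2^n+\sum_{n\le B_d}2^n$ times the recursive cost. The first sum is $O(M)$ because $(1-r_d)2<$ the relevant growth permitted by \cref{thm:algo-1}. The second is $O(\varepsilon^{-c})$ with $c$ small once $B_d$ is tuned, which is where the ceilings in $M_n$ matter. Setting $M^{1/p_d-1}\asymp\varepsilon$ gives $M\asymp\varepsilon^{-p_d/(p_d-1)}$. Unrolling the recursion over the constant number $D-d$ of levels, with $p_d$ fixed, yields a sample complexity of $O(\varepsilon^{-2(1+\delta/2^{d-1})})$. The loss of a factor $2$ in the exponent's $\delta$-term relative to \cref{cor:main-rne} comes from the ceiling/truncation overhead and the logarithmic accuracy inflation $\varepsilon_{d+1}=\varepsilon/\log\varepsilon^{-1}$.

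\textbf{Main obstacle.} The hard step is making the recursive per-level moment bound $\|\Delta_d^{(n)}\|_{p_d}\lesssim 2^{-\beta n}$ hold when the inner estimates are themselves deterministic-schedule estimators rather than unbiased rMLMC outputs. They now carry a small bias and have only a bounded $p_{d+1}$-th moment. I would handle this by induction on $d$ from $D$ downward. The inductive hypothesis is that the level-$(d+1)$ estimator has an $L^{p_{d+1}}$ error of $\varepsilon_{d+1}$ together with a uniformly bounded $p_{d+1}$-th moment. Then $\Delta_d^{(n)}$ is bounded by $L_d$ times the deviation of half-averages, controlled via von Bahr--Esseen at exponent $p_{d+1}$, plus $L_d\varepsilon_{d+1}$. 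Care is needed so that the bias term does not destroy the decay in $n$ for $n\le B_d$.
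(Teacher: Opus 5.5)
There is a genuine gap in your cost accounting, and it comes from how you build the inner estimates. You run every $X_i$ at the same accuracy $\varepsilon_{d+1}=\varepsilon/\log(\varepsilon^{-1})$, whatever the level $n$. By your own formula the cost is then $\sum_n M_n2^n\cdot C_{d+1}(\varepsilon_{d+1})$. An estimator with $L^{p_{d+1}}$-error $\varepsilon_{d+1}$ is not cheap: at level $d+1$ your scheme already uses $M\asymp\varepsilon_{d+1}^{-p_{d+1}/(p_{d+1}-1)}\ge\varepsilon_{d+1}^{-2}$ outer samples, so $C_{d+1}(\varepsilon_{d+1})\gtrsim\varepsilon_{d+1}^{-2}$. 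Since $\sum_nM_n2^n\ge M\gtrsim\varepsilon^{-2}$, you get $C_d(\varepsilon)\gtrsim\varepsilon^{-2}C_{d+1}(\varepsilon_{d+1})$. Unrolling gives roughly $\varepsilon^{-2(D-d+1)}$, the nested Monte Carlo rate, not $O(\varepsilon^{-2(1+\delta/2^{d-1})})$. So the step ``unrolling the recursion \dots\ yields'' is exactly where the argument fails.

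The structural reason is this. In Algorithm 1 the $2^n$ inner samples are unbiased single-sample rMLMC outputs with $O(1)$ expected cost. A deterministic-schedule estimator has no $O(1)$-cost, small-bias counterpart. Moreover, once every $X_i$ is already $\varepsilon_{d+1}$-accurate, $g_d(y_{\le d},X_1)$ alone has bias at most $L_d\varepsilon_{d+1}$, so your multilevel layer saves nothing.

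The missing idea is to tie the inner accuracy to the level. This is what Algorithm 2 does, and hence what \cref{prop:algo-2} (which you invoke for the truncation bias) actually analyzes. There the level-$n$ difference is not the antithetic difference of Algorithm 1 but $\Delta_d(y_{\le d},n)$ from \cref{eq:Delta_d}, built from one call of $R_{d+1}$ at accuracy $2^{-n/2}$ and one at $2^{-(n-1)/2}$. The paper's Algorithm 3 takes $\lfloor M_d\mb{P}_d(n)\rfloor$ i.i.d.\ copies of this difference for each $n\le B_d$. This has three consequences:
\begin{itemize}
\item The means telescope to $\mb{E}[g_d(y_{\le d},R_{d+1}(y_{\le d},2^{-B_d/2}))\mid y_{<d}]$. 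The inner error therefore enters only once through $L_d$, giving bias at most $2L_d2^{-B_d/2}$, with no $B_d$-fold accumulation and no need for $\varepsilon/\log\varepsilon^{-1}$.
\item The induction hypothesis gives $\|\Delta_d(y_{\le d},n)\|_{p_d}\le5L_d2^{-n/2}$ at per-sample cost $C_{d+1}(2^{-n/2})+C_{d+1}(2^{-(n-1)/2})$. Coarse levels are cheap and receive many samples; fine levels are expensive and receive few.
\item Von Bahr--Esseen over all $(n,m)$ summands then reproduces the cost recursion \cref{eq:cost-recur} and the moment bound \cref{eq:moments-recur} of truncated Algorithm 2, up to constants.
\end{itemize}
The last point requires the rounding to be harmless. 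The paper ensures this by checking $(1-r_d)^{B_d}M_d\ge1$, i.e.\ $M_d\mb{P}_d(n)\ge1$ on every level, with constant $c=(2L_d)^{2+\delta/2^{d-2}}$ in $M_d$. Your ceiling-overhead remark does not replace this check. With the truncation $B_d=2\log_2(2L_d/\varepsilon)$ that \cref{prop:algo-2} needs for its bias bound, $\sum_{n\le B_d}2^n$ is of order $\varepsilon^{-2}$, not $\varepsilon^{-c}$ with $c$ small.
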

In \cref{sec:quantum}, we move to the quantum setting. We first introduce Quantum-Accelerated Monte Carlo (QAMC).
Then, we study a natural quantum rMLMC algorithm from directly quantizing the classical algorithm in \cite{rne}, and demonstrate appropriate control on the bias and RMSE.

However, we fall short at the sample complexity control: there is no notion of expected cost for quantum algorithms over a distribution of multiple branches, and we simply pick up the largest cost over all branches. This causes us to pick up a factor of the QAMC cost per time-step remaining.
\begin{proposition}\label{prop:main-ec}
Under LBL assumptions (\cref{def:lbl}), there is a natural quantization of the classical RNE algorithm for each $0\le d\le D$ that, given $\pi$-a.e. history $y_{<d}$ and $\varepsilon>0$, estimates $\gamma_d(y_{<d})$ with RMSE at most $\varepsilon$ and sample complexity $\Omega(\varepsilon^{D-d+1}\log^{D-d+1}(\varepsilon^{-1}))$ as $\varepsilon\to 0$.
\end{proposition}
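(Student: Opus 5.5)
The plan is to make the ``natural quantization'' precise, prove the accuracy guarantee from the classical results, and then prove the complexity lower bound by a recursion on the remaining horizon $D-d$. I read the exponent in the statement as $\varepsilon^{-(D-d+1)}$, which is presumably what is intended.

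\textbf{The algorithm.} The construction follows Algorithm 1, with the truncated geometric level distribution of \cref{prop:algo-2} so that every branch has finite cost. Let $\mathcal A_d$ denote the quantum procedure at depth $d$.
\begin{itemize}
\item At depth $D$, $\mathcal A_D$ runs QAMC directly on the bounded-variance random variable $g_D(y)$, conditioned on $y_{<D}$.
\item At depth $d<D$, one coherent execution of the single-sample estimator is a unitary. It prepares $y_d\sim\pi\mid y_{<d}$ and $N\sim\mb P_d$ (truncated at $B_d=\Theta(\log\varepsilon^{-1})$) in superposition. It then invokes $\mathcal A_{d+1}$, the quantized version of Algorithm 1 at depth $d+1$, $2^N$ times to produce $X_1,\dots,X_{2^N}$. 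Finally it writes $\Delta_d^{(N)}/\mb P_d(N)$ into an output register.
\item $\mathcal A_d$ feeds this unitary to QAMC at target accuracy $\Theta(\varepsilon)$, with failure probability $\mathrm{poly}(\varepsilon)$ handled by median amplification. This costs a $\log(\varepsilon^{-1})$ factor.
\end{itemize}

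\textbf{Accuracy.} The RMSE guarantee comes from combining three sources of error.
\begin{itemize}
\item The truncation bias is controlled by \cref{prop:algo-2}.
\item The moment bound of \cref{thm:algo-1} (item 3), with $p_d>1$, lets the heavy-tailed QAMC variant apply to the single-sample estimator.
\item The error from replacing exact inner samples by quantum estimates is handled by the Lipschitz property of $g_d$ in its last argument (item 2 of \cref{def:lbl}). It propagates as $L_d$ times the inner RMSE, and a union bound over the constantly many depths keeps the total at $O(\varepsilon)$ after rescaling.
\end{itemize}

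\textbf{Lower bound on cost.} Unlike the classical setting, the query cost of one call to the depth-$d$ unitary is not an average over branches. A coherent circuit must accommodate the most expensive branch, namely $N=B_d$. That branch performs $2^{B_d}$ recursive calls, and $2^{B_d}\ge 1$ grows polynomially in $\varepsilon^{-1}$ by the choice of $B_d$. Hence, writing $C_d(\varepsilon)$ for the cost of $\mathcal A_d$,
\[
C_d(\varepsilon)\;\ge\; c\,\varepsilon^{-1}\log(\varepsilon^{-1})\cdot C_{d+1}(\varepsilon),\qquad C_D(\varepsilon)\ge c\,\varepsilon^{-1}\log(\varepsilon^{-1}).
\]
The first factor is the standard $\Omega(\varepsilon^{-1})$ number of oracle calls QAMC needs at accuracy $\varepsilon$, times the $\log$ factor from median amplification. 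The second factor is a lower bound on the cost of a single oracle call. Unrolling the recursion over the $D-d+1$ levels gives $\Omega(\varepsilon^{-(D-d+1)}\log^{D-d+1}(\varepsilon^{-1}))$.

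\textbf{Main obstacle.} The hard step is justifying that each oracle call genuinely costs at least the worst-branch cost $C_{d+1}$, rather than the expected cost $O(1)$ available classically. For the natural quantization this holds by construction. The circuit is a fixed sequence of gates that, controlled on the $N$ register, runs the maximal number $2^{B_d}$ of subroutine slots, because the number of calls cannot depend on an unmeasured register. I would make this formal by defining the circuit explicitly and counting its queries. I would also note that the inner accuracy must remain $\Theta(\varepsilon)$ rather than loosening with depth, since the Lipschitz propagation in the accuracy step needs inner errors of order $\varepsilon$. This prevents the factors from shrinking as one descends the recursion.
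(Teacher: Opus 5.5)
Your cost recursion $C_d(\varepsilon)\gtrsim\varepsilon^{-1}\log(\varepsilon^{-1})\,C_{d+1}(\Theta(\varepsilon))$ and its unrolling are what the paper uses for \cref{prop:var-time}. The problem is the algorithm you apply it to: it is not the paper's quantization, and the difference empties the argument. You replace the single unbiased inner samples of Algorithm 1 by full QAMC estimates at accuracy $\Theta(\varepsilon)$ in \emph{every} branch. Averaging $2^N$ of these is then pointless, and the branch $N=0$ already costs $C_{d+1}(\Theta(\varepsilon))$. So your inequality holds for a reason unrelated to coherence. It would hold even if each oracle call were charged only its expected cost, and the worst-branch argument you call the main obstacle is never used. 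Charging the worst branch in full would in fact add a factor $2^{B_d}=\mathrm{poly}(\varepsilon^{-1})$ per depth, so your stated rate is not even the true order of your algorithm. Read literally, the existential $\Omega$-statement is satisfied, but trivially. The paper's Algorithms 4--5 instead quantize Algorithm 2. Branch $n$ calls $R_{d+1}$ at accuracies $\alpha^n$ and $\alpha^{n-1}$ (with $\alpha=2/3$, $r=1/2$), so an average branch is cheap; the paper notes expected-cost accounting would give $\tilde O(\varepsilon^{-1})$. The $\tilde\Omega(\varepsilon^{-(D-d+1)})$ bound comes only from the branch $n=B_d$, where $\alpha^{B_d}\asymp\varepsilon$, which the coherent oracle must pay for. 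That variable-time failure is what the proposition is meant to exhibit. Your claim that the inner accuracy ``must remain $\Theta(\varepsilon)$'' holds for your design but not for the multilevel one: there only the finest level needs it, and the telescoping handles the coarser levels.

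Your accuracy step also does not go through as written. \cref{thm:algo-1} and \cref{prop:algo-2} analyze the classical estimators and control $L^{p_d}$-errors. Their inner values are exact recursive outputs: unbiased single samples in \cref{thm:algo-1}, classical $R_{d+1}$ at accuracy $2^{-n/2}$ in \cref{prop:algo-2}. In your unitary the inner values are biased QAMC outputs combined in antithetic averages. So neither the $p_d$-moment bound nor the truncation-bias bound transfers, and the heavy-tailed QAMC variant is beside the point. You need a direct computation for the algorithm you actually run. The paper does this for its own algorithm. It shows $\sigma_d^2\lesssim L_d^2\sum_{n\le B_d}\alpha^{2n}(1-r)^{-n}=O(L_d^2)$, since $\alpha^2/(1-r)=8/9<1$, so the bounded-second-moment \cref{cor:qmc} applies. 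It then shows $\beta_d^2\le L_d^2(\alpha^{2B_d}+\beta_{d+1}^2)$ with $B_d=\log_\alpha(\varepsilon/(\sqrt D L_0\cdots L_d))$. For your algorithm the analogous computation does work. Let $\varepsilon'=\Theta(\varepsilon)$ be the inner RMSE. For $n\ge 1$ the antithetic difference has conditional second moment at most $L_d^2\varepsilon'^2 2^{-n}$, which is summable against $\mb{P}_d(n)^{-1}$ when $r<1/2$. The bias is at most $L_d\varepsilon'$. So this part is repairable, but it must be redone rather than cited.
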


As discussed for example in \cite{variabletime}, this is a common issue with direct quantizations of classical variable-time algorithms. However, the derandomized versions of the classical algorithms presented in \cref{sec:classical} do admit quantization, at the cost of some logarithmic factors per time step. Nonetheless, as $D$ is treated as a constant, the total sample complexity is $\tilde{O}(\varepsilon^{-1})$, giving a quadratic speedup of \cref{cor:main-rne,thm:main-derand}.
This is our main contribution on the quantum-side.
\begin{theorem}\label{thm:main-wcc}
Under LBL assumptions (\cref{def:lbl}), there exists a quantum MLMC algorithm for each $0\le d\le D$ that, given $\pi$-a.e. history $y_{<d}$ and $\varepsilon>0$, estimates $\gamma_d(y_{<d})$ with RMSE at most $\varepsilon$ and sample complexity at most
$O\left(\varepsilon^{-1}\log^{3(D-d)+1}(\varepsilon^{-1})\right)$ as $\varepsilon\to 0$.
\end{theorem}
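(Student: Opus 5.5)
We prove \cref{thm:main-wcc} by backward induction on $d$, from $d=D$ down to $d=0$. The induction hypothesis at level $d+1$ is the following: for every target accuracy $\eta$, there is a quantum procedure $\mc Q_{d+1}(\eta)$ that outputs an estimate of $\gamma_{d+1}(y_{\le d})$ with RMSE at most $\eta$. Its cost is \emph{deterministic}, at most $C\eta^{-1}\log^{3(D-d-1)+1}(\eta^{-1})$, independent of the history. We also require that it can be run coherently as a unitary subroutine. We will use a QAMC mean estimator for random variables with bounded variance (or bounded $p$-th moment): it achieves RMSE $\eta$ using $O(\sigma\eta^{-1}\log(\eta^{-1}))$ calls to the sampling oracle, after median-of-means boosting. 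For the base case $d=D$, we apply QAMC directly to $g_D(y)$ with $y_D\sim\pi\mid y_{<D}$. Since $g_D\in L^2(\pi)$, this gives cost $O(\varepsilon^{-1}\log(\varepsilon^{-1}))$, which matches the exponent $3\cdot 0+1$.

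For the inductive step we quantize the derandomized scheme behind \cref{thm:main-derand}. We truncate at level $B_d=\Theta(\log\varepsilon^{-1})$ as in \cref{prop:algo-2} and write the deterministic telescoping sum
\[
\gamma_d(y_{<d})\approx \mb E\Big[g_d\big(y_{\le d},X_1\big)\Big]+\sum_{n=1}^{B_d}\mb E\big[\Delta_d^{(n)}\big].
\]
Here the $X_i$ are i.i.d.\ classical-style samples of the inner quantity. The key modification replaces the inner empirical averages by level-$(d+1)$ estimators. On level $n$ we estimate each term $\mb E[\Delta_d^{(n)}]$ separately with its own QAMC instance. Inside the oracle, the inner averages $2^{-n}\sum X_i$ are replaced by calls to $\mc Q_{d+1}(2^{-n/2})$, one for the full average and one for each half-average. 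Because $g_d$ is $L_d$-Lipschitz, the antithetic difference $\Delta_d^{(n)}$ still has second moment $O(2^{-n})$ up to the extra error, and therefore standard deviation $O(2^{-n/2})$. The bias from truncation is $O(2^{-B_d/2})\le\varepsilon/2$, using the bias bound from \cref{prop:algo-2}. The bias from replacing exact inner means by inner RMSE-$\eta_n$ estimates is at most $L_d\eta_n$ per level.

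Next we allocate the accuracy budget. We give each of the $B_d+1$ levels an outer RMSE target $\varepsilon/\sqrt{B_d}$ (or $\varepsilon/B_d$ to be safe against correlated biases). Level $n$ then needs $O(2^{-n/2}B_d\varepsilon^{-1}\log\varepsilon^{-1})$ oracle calls. Each call costs $\tilde O(2^{n/2}\log^{3(D-d-1)+1})$, by the induction hypothesis with $\eta=2^{-n/2}$. The factors of $2^{n/2}$ cancel, so each level contributes $\varepsilon^{-1}\cdot B_d\cdot\log\varepsilon^{-1}\cdot\log^{3(D-d-1)+1}$. Summing over the $B_d=O(\log\varepsilon^{-1})$ levels gives three extra logarithmic factors, one each from the number of levels, the per-level accuracy split, and the QAMC/median overhead. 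The total is $O(\varepsilon^{-1}\log^{3(D-d)+1}(\varepsilon^{-1}))$, as claimed. Since every level has a fixed cost, the resulting algorithm again has deterministic cost and can serve as a coherent subroutine at level $d-1$. The outer sample $y_d$ is drawn inside the oracle at unit cost.

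The main obstacle is that the inner estimators are not exactly unbiased and have only controlled RMSE, not a $p$-th moment bound on an unbiased sample. The moment bound on $\Delta_d^{(n)}$ required by QAMC must therefore be re-derived carefully. Bounding this moment naively by $L_d\eta_n$ would not decay like $2^{-n/2}$ unless $\eta_n$ is chosen to track $2^{-n/2}$. A further difficulty is that the inner bias $L_d\eta_n\approx 2^{-n/2}$ would accumulate across levels. We plan to handle this by using the half-average structure: the full estimate and the two half estimates share coherent inner runs, so their biases nearly cancel in the antithetic combination. If that cancellation fails, we will instead tighten the inner accuracy to $\eta_n=\varepsilon 2^{-n/2}$-type targets. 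This costs only additional logarithmic factors through QAMC's $\log(1/\text{failure})$ dependence.
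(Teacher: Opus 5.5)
Your cost accounting matches the paper's proof of \cref{prop:quantum-wcc}. That covers the per-level QAMC target of order $\varepsilon/B_d$, the second moment $O(L_d^2 2^{-n})$ of the level-$n$ difference (from inner RMSE $2^{-n/2}$, Lipschitz, and the triangle inequality through $\gamma_{d+1}$), the cancellation of the $2^{\pm n/2}$ factors, and the three extra logarithms. The genuine gap is the bias. You assert that the inner estimators contribute a bias of up to $L_d\eta_n$ \emph{per level} and that these accumulate. With $\eta_n=2^{-n/2}$ that sums to $\Theta(L_d)$, not $O(\varepsilon)$, so as written your argument does not give RMSE $\varepsilon$.

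Neither remedy closes this. The ``coherent shared inner runs'' cancellation cannot be made precise here. The inner estimators are QAMC outputs, not empirical averages, so the full estimate is not the mean of the two half estimates and there is no antithetic identity to exploit. The fallback $\eta_n=\varepsilon 2^{-n/2}$ is not a logarithmic expense. The bias of an inner estimator is controlled by its accuracy parameter, whose cost is $\tilde\Theta(\eta^{-1})$ by your own induction hypothesis; QAMC's $\log(1/\text{failure})$ dependence only governs tails. With $\eta_0=\varepsilon$, the $n=0$ term (second moment $\Theta(1)$) alone needs order $\varepsilon^{-1}$ QAMC calls, each costing order $\varepsilon^{-1}$ up to logarithms. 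That destroys the speedup, and the loss compounds over time steps to roughly $\varepsilon^{-(D-d+1)}$, as in \cref{prop:main-ec}.

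The missing idea is that no cancellation mechanism is needed, because the level means telescope \emph{exactly}. Let level $n$ use an independent call $R_{d+1}(y_{\le d},2^{-n/2})$ for the fine term and $R_{d+1}(y_{\le d},2^{-(n-1)/2})$ for the coarse term, with the $n=0$ term having no coarse part, as in your display. As literally written, all three calls are at accuracy $2^{-n/2}$, which would make every level $n\ge 1$ mean-zero. With this choice, the coarse term of level $n+1$ has the same conditional law as the fine term of level $n$, so
\[
\sum_{n=0}^{B_d}\mathbb{E}\left[\Delta_d(y_{\le d},n)\middle|y_{<d}\right]=\mathbb{E}\left[g_d\left(y_{\le d},R_{d+1}(y_{\le d},2^{-B_d/2})\right)\middle|y_{<d}\right].
\]
All intermediate inner biases disappear. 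By Lipschitz, Cauchy--Schwarz and the inductive hypothesis, the only remaining bias is at most $L_d2^{-B_d/2}$, which equals $\varepsilon/2$ for $B_d=2\log_2(2L_d/\varepsilon)$.

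This is exactly the paper's argument. It is also why the paper uses the plain fine-minus-coarse difference: for merely Lipschitz $g_d$ the antithetic form would not improve the $O(2^{-n})$ second moment anyway. Finally, commit to the per-level target $\varepsilon/(B_d+1)$ rather than $\varepsilon/\sqrt{B_d}$. The estimator of \cref{cor:qmc} is biased because it clips, so the level errors must be combined via \cref{lem:second-mom}. Your logarithm count already assumes this choice.
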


A few remarks are in order:
\begin{itemize}
    \item Our algorithm uses deterministic level scheduling similar to \cref{thm:main-derand} and \cite{qmlmc}. The latter corresponds to the $D=2$ case in \cref{thm:main-wcc}.
    \item It obtains quadratic speedup over the classical algorithms in \cite{rne} via QAMC, and is quantum-inside-quantum in the sense of \cite{qmlmc}. This is optimal up to log-factors: the RNE problem includes as a special case Monte Carlo mean estimation, which is known to have $\tilde{\Omega}(\varepsilon^{-1})$ quantum complexity \cite{nayak,hamoudi}.
    \item Compared to \cite{rne}, our quadratic speedup is further strengthened by controlling the $L^2$-error (i.e. RMSE) instead of $L^{p_d}$-error for $p_d\in (1, 2)$, thereby enabling us to replace the $\varepsilon^{-O(\delta)}$ term in the complexity by an explicit poly-logarithm factor in $\varepsilon^{-1}$.
    \item There are three logarithm factors per remaining time step, one from the cost per level, one from the number of levels, and the last from applying QAMC.
\end{itemize}

The last two observations are especially curious as \cref{thm:main-wcc} somehow manages to bypass the arbitrarily small $\delta>0$ deficiency in both error control and cost featured in all the classical theorems. Essentially, the QAMC subroutine gives us sufficiently more slack that simplifies a lot of the intricacies in parameter choices common for classical MLMC algorithms.
We discuss this further in \cref{sec:quantum-wcc} where we also give the algorithm and sketch the proof.

\subsection{Notations}
Let $[M]=\{1, 2, \dots, M\}$.
Let $y_{\le d} = (y_0, \dots, y_d)$ and $y_{<d} = (y_0, \dots, y_{d-1})$, for each $0\le d\le D$. Let $\on{Geo}(r)$ be the geometric random variable with rate $r$. Let $\pi$-a.e. $y$ denote holding for almost every $y\sim \pi$.
For a random variable $X\sim \pi$, let $\Vert X\Vert_{ p}\coloneqq \left(\mb{E}_\pi |X|^p\right)^{1/p}$ with expectation taken over all the randomness. By $L^p$-error of estimator $R$ of $X$, we mean $\Vert R-X\Vert_p$. Let root-mean-squared error (RMSE) be the $L^2$-error $\Vert R-X\Vert_2$ and let mean-squared error (MSE) be $\Vert R-X\Vert_2^2$. 

We adopt standard asymptotic notation in big-$O$ and $\lesssim$ where we treat $d, D, \delta,$ and $L_0, \dots, L_D$ as constants with $\varepsilon\to 0$. We also have counterparts with $\Theta, \Omega$ represented by $\asymp$ and $\gtrsim$. Finally, use $\tilde{O}$ to indicate the omission of logarithmic factors in big-$O$.
\subsection*{Acknowledgments}
JB and GW acknowledge support from grants NSF-CCF-2403007 and
NSF-CCF-2403008. YS is funded by the NSF Graduate Research Fellowship and the Stanford Graduate Fellowship. 
The authors thank Chenyi Zhang for helpful discussions on Quantum-Accelerated Monte Carlo with bounded variance \cite{chenyi}.
\section{Classical Algorithms}
\label{sec:classical}
\subsection{Truncated rMLMC}
In this section, we modify Algorithm 1 to show \cref{cor:main-rne} holds if we replace $N\sim \on{Geo}(r_d)$ with distribution $\mb{P}_d$ given by $N\sim \on{Geo}(r_d)$ conditioned on $ N\le B_d$ for some cutoff $B_d =O(\log(\varepsilon^{-1}))$.
 This serves as a review of \cite{rne} and a model for \cref{thm:main-derand,thm:main-wcc}. 
 The main technical steps are two-fold:
 \begin{itemize}
 	\item First, the truncation of levels introduces a bias but whose expectation $\beta_d$ can be recursively bounded provided $B_d$ is sufficiently large.
 	\item Second, instead of having a single-sample estimator that we bootstrap via Monte Carlo in \cref{cor:main-rne}, we incorporate the error $\varepsilon$ as a parameter of the algorithm that we satisfy. This aids the induction argument.
 \end{itemize}
Both of these subtleties are unnecessary for \cite{rne}, but they are crucial to derandomizing and quantizing the MLMC algorithm. This is essentially a combination of Algorithm 1 and \cref{cor:main-rne}. 

We assume number of nestings $D$, functions $g_1, \dots, g_D$, and $\delta >0$ are given, as well as distribution $\mb{P}_d$ of the random levels and leading constant of $M_d$ are specified in \cref{prop:algo-2}. In particular, we first present a numerical lemma that informs the choice of $r_d$, similar to that in Theorem 2.4 of \cite{rne}. Observe that this choice is quite delicate, as is often the case in classic MLMC design to control both the cost and moment.
\begin{lemma}
\label{lem:numerics}
Fix any $\delta\in (0, 1/2)$. For any $0\le d\le D$, there exists $r_d\in (0, 1)$ such that
\begin{equation}
(1-r_d)2^{1+\delta/2^{d+1}} = (1-r_d)^{-1+\delta/2^d}2^{-1-\delta/2^{d+1}} <1 .
\end{equation}
\end{lemma}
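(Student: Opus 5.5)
Set $x=1-r_d\in(0,1)$ and $a=\delta/2^{d+1}$, so that $\delta/2^d=2a$ and $a\in(0,1/4)$ because $\delta<1/2$. The statement asks for $x\in(0,1)$ with
\[
x\,2^{1+a} = x^{-1+2a}\,2^{-1-a} \quad\text{and}\quad x\,2^{1+a}<1 .
\]
The plan is to solve the equality explicitly and then check the inequality and the range.

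First I clear the power of $x$. Multiplying the equality by $x^{1-2a}2^{1+a}$ gives
\[
x^{2-2a}\,2^{2+2a}=1 ,
\]
so $x^{2(1-a)}=2^{-2(1+a)}$. Since $x>0$ and $1-a>0$, this has the unique solution
\[
x = 2^{-(1+a)/(1-a)} .
\]
The map $t\mapsto t^{2-2a}$ is a strictly increasing bijection of $(0,\infty)$, so this root is genuine and we may set $r_d=1-2^{-(1+a)/(1-a)}$.

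Next I check the conditions. Because $(1+a)/(1-a)>0$, we get $x\in(0,1)$, and hence $r_d\in(0,1)$. For the inequality,
\[
x\,2^{1+a}=2^{(1+a)\left(1-\frac{1}{1-a}\right)}=2^{-a(1+a)/(1-a)}<1 ,
\]
since $a>0$. So both the equality and the strict inequality hold.

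A further observation is useful for later. Since $(1+a)/(1-a)>1$, we have $x<1/2$, so in fact $r_d>1/2$. This sits awkwardly with the range $r_d\in(0,1/2)$ quoted in \cref{thm:algo-1}. The lemma as stated only requires $r_d\in(0,1)$, so I would not try to match that other normalization. There is no substantive obstacle here: the whole argument is explicit algebra, and the only care needed is keeping the exponent $\delta/2^{d+1}$ distinct from $\delta/2^d$ when clearing the power of $x$.
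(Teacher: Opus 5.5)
Your proof is correct, but it takes a different route from the paper's. The paper argues existence non-constructively. By the intermediate value theorem the two sides cross: as $r_d\to 0^+$ the left side tends to $2^{1+\delta/2^{d+1}}$ and exceeds the right side, while as $r_d\to 1^-$ the right side blows up. The strict inequality then comes from observing that the product of the two sides is $(1-r_d)^{\delta/2^d}$, so at equality each side equals $(1-r_d)^{\delta/2^{d+1}}<1$.

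You instead clear the power of $x$, solve in closed form $1-r_d=2^{-(1+a)/(1-a)}$, and check the inequality by direct exponent arithmetic. Your common value $2^{-a(1+a)/(1-a)}$ is exactly $x^{a}=(1-r_d)^{\delta/2^{d+1}}$, which matches the paper's product observation.

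The paper's route avoids exponent bookkeeping but only gives existence. Yours also gives uniqueness and the explicit formula $1-r_d=2^{-(2+\delta/2^d)/(2-\delta/2^d)}$, since $(1+a)/(1-a)=(2+\delta/2^d)/(2-\delta/2^d)$. The paper actually needs this formula and re-derives it in its appendix proof of \cref{prop:algo-3} to show $(1-r_d)^{B_d}M_d\ge 1$.

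Your side remark that $r_d>1/2$ is also right, and the same explicit formula confirms it. It points to a parametrization mismatch with the range $(0,1/2)$ quoted in \cref{thm:algo-1}, not to a flaw in the lemma.
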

The proof is a simple observation using intermediate value theorem, and is given in \cref{sec:app-prelim}. 
With these choices of parameters, we state the following result.

\begin{algorithm}

\caption{Randomized MLMC Algorithm for RNE}
\label{alg:rmlmc2}
 \KwIn{Time step $d\in\{0, 1, \dots, D\}$ (implicitly), history $y_{<d}$, distribution $\mb{P}_d$ on $\mb{N}$, and error parameter $\varepsilon>0$}
 \KwOut{Estimator $R_d(y_{<d}, \varepsilon)$ of $\gamma_d(y_{<d})$ with $L^{p_d}$ error at most $\varepsilon$ for $p_d \coloneqq 2-\delta/2^d$ }
 $M_d\gets \Theta(\varepsilon^{-2(1+\delta/2^{d-1})})$ from \cref{prop:algo-2} 
 
 \For{$1\le m\le M_d$}{
Independently sample $y_d^{(m)}\sim \pi \mid y_{<d}$

$y_{\le d}^{(m)}\gets (y_{<d}, y_d^{(m)})$}

\If{$d=D$}{
    \Return $\frac{1}{M_D}\sum_{m=1}^{M_D} g_D\left(y_{\le D}^{(m)}\right)$
}
\Else{\For{$1\le m\le M_d$}{
Independently sample $N^{(m)}\sim \mb{P}_d$

$A_d^{(m)}\gets \Delta_d\left(y_{\le d}^{(m)}, N^{(m)}\right)/\mb{P}_d(N^{(m)})$
where
\begin{equation}
\label{eq:Delta_d}
\Delta_d(y_{\le d}, n) \coloneqq g_d\left(y_{\le d}, R_{d+1}(y_{\le d}, 2^{-n/2})\right) -g_d\left(y_{\le d}, R_{d+1}(y_{\le d}, 2^{-(n-1)/2})\right)
\end{equation}
}
\Return $\frac{1}{M_d}\sum_{m=1}^{M_d} A_d^{(m)}$
}
\end{algorithm}

\begin{proposition}
\label{prop:algo-2}
Under LBL assumptions (\cref{def:lbl}) and any $\delta \in (0, 1/2)$, let $p_d\coloneqq 1-\delta/2^d$ and let $r_d$ be chosen as in \cref{lem:numerics}. For each $0\le d\le D$, there exists some $B_d, M_d$ such that
\begin{equation} \label{eq:BdMd}
\begin{aligned}
B_d &\coloneqq \Theta\left(\log\left({\varepsilon}^{-1}\right)\right) \\
    M_d &\coloneqq \Theta\left(\varepsilon^{-2\left(1+\delta/2^{d-1}\right)}\right) 
\end{aligned}
\end{equation}
and that the following holds:
for any time step $d$, $\pi$-a.e. history $y_{<d}$, error $\varepsilon\in (0, 1)$, and distribution $\mb{P}_d$ that is either the geometric distribution given by
\begin{equation}
\mb{P}_d(n)\propto (1-r_d)^n ,
\end{equation}
or its truncation at $B_d$ given by
\begin{equation}
\mb{P}_d(n)\propto (1-r_d)^n\mbf{1}\{0\le n\le B_d\},
\end{equation}
then Algorithm 2 outputs an estimator $R_d(y_{<d}, \varepsilon)$ satisfying
\begin{enumerate}
	\item Bias: the bias satisfies \begin{align*}\beta_d & \coloneqq  \Vert \mb{E}\left[R_d(y_{<d}, \varepsilon)\middle|y_{<d}\right]-\gamma_d(y_{<d})\Vert_{ p_d}\\ & \le \frac{\varepsilon}{2}\left(1-\frac{d}{D}\right)  \end{align*}
	\item Cost: the expected sample complexity of $R_d(y_{<d}, \varepsilon)$ is 
	\[ C_d(\varepsilon) =O\left( \varepsilon^{-2\left(1+\delta/2^{d-1}\right)}\right)\]
	\item Moment: the ${p_d}$-th moment of $R_d(y_{<d}, \varepsilon)$ satisfies
	\[ \sigma_d(\varepsilon) \coloneqq \Vert R_d(y_{<d}, \varepsilon)-\mu_d(y_{<d})\Vert_{ p_d}\le \varepsilon\]
\end{enumerate}
Together, Algorithm 2 estimates $\gamma_d(y_{<d})$ with $L^{p_d}$-error at most $2\varepsilon$ and sample complexity $O(\varepsilon^{-2(1+\delta/2^{d-1})})$.
\end{proposition}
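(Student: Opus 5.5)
We argue by backward induction on $d$, from $d=D$ down to $d=0$, proving the bias, cost and moment bounds together. At each level we use the inductive hypothesis for $R_{d+1}(\cdot,\varepsilon')$ at the tolerances $\varepsilon'=2^{-n/2}$ that the increments $\Delta_d(\cdot,n)$ in \eqref{eq:Delta_d} require. The base case $d=D$ is plain Monte Carlo. The estimator is unbiased, so $\beta_D=0$ matches the required bound $\frac{\varepsilon}{2}(1-D/D)=0$. The cost is $M_D$. For the moment we use $p_D<2$ and $g_D\in L^2(\pi)$: Jensen gives $\sigma_D\le \Vert g_D\Vert_2/\sqrt{M_D}$, which is at most $\varepsilon$ once $M_D\gtrsim \varepsilon^{-2}$. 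This sits inside the stated $M_D=\Theta(\varepsilon^{-2(1+\delta/2^{D-1})})$.

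For the inductive step, fix $y_{<d}$. By telescoping, $\mb E[A_d^{(m)}\mid y_{\le d}^{(m)}]$ equals $\sum_{n\le B}\mb E[\Delta_d(y_{\le d},n)]$, with $B=B_d$ or $B=\infty$. This sum is $\mb E\, g_d(y_{\le d},R_{d+1}(y_{\le d},2^{-B/2}))$ minus a fixed starting term, which we fix by convention: at $n=0$ the second term is absent, or we use a crude base estimator. The bias therefore splits into two parts.
\begin{enumerate}
\item The inner error. By the $L_d$-Lipschitz property, $|\mb E g_d(y_{\le d},R_{d+1})-g_d(y_{\le d},\gamma_{d+1})|\le L_d\Vert R_{d+1}-\gamma_{d+1}\Vert_{p_{d+1}}\le L_d\cdot 2\cdot 2^{-B/2}$. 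This uses the inductive hypothesis and the fact that $L^1$ is controlled by $L^{p_{d+1}}$.
\item The truncation error. It is $0$ when $B=\infty$, and in that case the $n\to\infty$ limit is handled by the same Lipschitz estimate.
\end{enumerate}
Choosing $B_d=2\log_2(\varepsilon^{-1})+O(1)$ makes the total contribution at most $\frac{\varepsilon}{2}\cdot\frac1D$. Adding the bias $\beta_{d+1}$ propagated through the Lipschitz map gives $\beta_d\le \frac{\varepsilon}{2}(1-\frac dD)$. To make these bookkeeping constants close, the inner tolerances should be rescaled by $L_d$-dependent constants, and I would adjust $B_d$ and the constants accordingly.

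For cost and moment, the key estimate is $\Vert\Delta_d(y_{\le d},n)\Vert_{p_{d+1}}\lesssim L_d 2^{-n/2}$. It follows from Lipschitzness, the triangle inequality through $\gamma_{d+1}$, and the inductive bias and moment bounds at tolerances $2^{-n/2}$ and $2^{-(n-1)/2}$. The expected cost of one $A_d^{(m)}$ is then $\sum_n \mb P_d(n)\, C_{d+1}(2^{-n/2})$, which is of order $\sum_n (1-r_d)^n 2^{n(1+\delta/2^d)}$. For the moment, $\mb E|A_d|^{p_d}\lesssim\sum_n \mb P_d(n)^{1-p_d}\Vert\Delta_d(n)\Vert_{p_d}^{p_d}$ is of order $\sum_n (1-r_d)^{n(1-p_d)}2^{-np_d/2}$, using $p_d<p_{d+1}$ so that the $L^{p_{d+1}}$ bound controls the $L^{p_d}$ norm. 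Both geometric series converge exactly when the two sides of the identity in \cref{lem:numerics} are less than $1$; this is the purpose of that lemma. For the truncated distribution, the normalizing constant lies in $[r_d,1]$, so the same bounds hold up to constants. Consequently one sample costs $O(1)$ and has $O(1)$ $p_d$-th moment, uniformly in $\varepsilon$.

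Averaging $M_d$ i.i.d. centered copies and applying the von Bahr--Esseen inequality for $p_d\in(1,2)$ gives $\sigma_d\lesssim M_d^{-(1-1/p_d)}$. Here $1-1/p_d=(1-\delta/2^d)/(2-\delta/2^d)$, so $M_d\asymp\varepsilon^{-p_d/(p_d-1)}$ suffices. Since $p_d/(p_d-1)\le 2(1+\delta/2^{d-1})$ for small $\delta$, the choice $M_d=\Theta(\varepsilon^{-2(1+\delta/2^{d-1})})$ works, and the total expected cost is $M_d\cdot O(1)$. Finally, the $L^{p_d}$-error is at most $\beta_d+\sigma_d\le 2\varepsilon$, since $R_d$ is centered at its own mean $\mu_d$.

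The main obstacle is the exponent bookkeeping. The cost recursion as written carries a factor $2^{n(1+\delta/2^{d})}$ from $C_{d+1}$, and this must match the $2^{1+\delta/2^{d+1}}$ appearing in \cref{lem:numerics}. The mismatch suggests the inner calls should be made at tolerance $2^{-n/2}$ only up to the exponent shift $p_{d+1}$ versus $p_d$. I would first try redoing the cost sum with the exact $C_{d+1}$ exponent and, if needed, adjusting the inner tolerance schedule so that both series reduce exactly to the lemma's two expressions.
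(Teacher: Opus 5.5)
Your outline is the paper's proof: backward induction, telescoping for the bias, $\|\Delta_d(\cdot,n)\|_{p_d}\lesssim L_d2^{-n/2}$ from Lipschitzness and the induction hypothesis, von Bahr--Esseen, two geometric series, and $M_d\gtrsim\varepsilon^{-p_d/(p_d-1)}$. The argument fails at the step where it must close. You claim that both series converge exactly when the two expressions in \cref{lem:numerics} are below $1$, and this is false.

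Put $\eta=\delta/2^d\in(0,\frac12)$ and $x=\log_2\frac{1}{1-r_d}$. The ratios you correctly computed are $(1-r_d)2^{1+\eta}$ for the cost and $(1-r_d)^{-1+\eta}2^{-1+\eta/2}$ for the moment (note $p_d/2=1-\eta/2$). The lemma instead controls $(1-r_d)2^{1+\eta/2}$ and $(1-r_d)^{-1+\eta}2^{-1-\eta/2}$. Take the lemma's $r_d$, namely $x=\frac{2+\eta}{2-\eta}$.
\begin{itemize}
\item Your cost ratio equals $2^{-\eta^2/(2-\eta)}<1$, so the obstacle you flagged is harmless.
\item Your moment ratio equals $2^{\eta(2-3\eta)/(2(2-\eta))}>1$. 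Hence your bound on $\mathbb{E}|A_d|^{p_d}$ is infinite for the geometric law, and grows like $\varepsilon^{-\eta(2-3\eta)/(2-\eta)}$ for the truncated law. So \emph{one sample has $O(1)$ $p_d$-th moment} fails in both cases.
\end{itemize}

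No other $r_d$ makes both series converge either. Since $C_{d+1}(\varepsilon')\ge M_{d+1}\asymp\varepsilon'^{-2(1+\eta)}$, a finite expected cost under the geometric law forces $x>1+\eta$. The moment series needs $x<\frac{1-\eta/2}{1-\eta}$. But $1+\eta<\frac{1-\eta/2}{1-\eta}$ holds only when $\eta>\frac12$. Your fallback of changing the tolerances $2^{-n/2}$ to $2^{-\theta n}$ also fails: it multiplies both thresholds by $2\theta$, so the incompatibility remains.

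The paper's appendix follows your route and reaches the lemma's expressions only through two slips. It writes $C_{d+1}(2^{-N/2})\lesssim 2^{N(1+\delta/2^{d+1})}$, whereas the stated hypothesis gives $2^{N(1+\delta/2^{d})}$. It also writes $2^{-p_d/2}=2^{-1-\delta/2^{d+1}}$, which has the wrong sign. So copying the paper will not close the gap.

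There are two ways to repair it.
\begin{itemize}
\item \textbf{Truncated law.} The stated $r_d$ and $M_d$ do work, but by a different argument. The moment sum has ratio above $1$, so it is at most a constant times its $n=B_d$ term. This gives
\[
\sigma_d^{p_d}\lesssim\bigl(M_d\mathbb{P}_d(B_d)\bigr)^{1-p_d}\bigl(L_d2^{-B_d/2}\bigr)^{p_d}\lesssim\bigl(M_d\mathbb{P}_d(B_d)\bigr)^{1-p_d}\varepsilon^{p_d},
\]
which is at most $\varepsilon^{p_d}$ once $M_d(1-r_d)^{B_d}$ exceeds a suitable constant. That is the same inequality checked in the proof of \cref{prop:algo-3}, and it holds because $x\le 1+2\eta$. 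The cost series converges, as noted above.
\item \textbf{Geometric law.} Here the parameters must change. Run the induction with $M_d=\Theta(\varepsilon^{-2(1+\delta/2^{d})})$, so that inner calls cost $2^{n(1+\eta/2)}$. Take $x\in\bigl(1+\frac{\eta}{2},\frac{1-\eta/2}{1-\eta}\bigr)$, for example $1-r_d=2^{-2/(2-\eta)}$; this interval is nonempty for every $\eta\in(0,1)$. Both series then converge. Also $M_d\gtrsim\varepsilon^{-p_d/(p_d-1)}$ still holds, because $\frac{2-\eta}{1-\eta}\le 2(1+\eta)$ for $\eta\le\frac12$. The resulting cost $O(\varepsilon^{-2(1+\delta/2^d)})$ is better than claimed. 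However, $r_d$ is then not the one from \cref{lem:numerics}, and $M_d$ no longer has the stated order.
\end{itemize}
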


Now, \cref{thm:algo-1} and \cref{cor:main-rne} follow immediately as the geometric case, whereas the truncated case is novel and the foundation of our next algorithms. Both proofs are similar. 
We begin with a corollary of Cauchy-Schwarz.
\begin{lemma}
\label{lem:second-mom}
For any random variables $X_1, \dots, X_n$
\begin{equation}
\mb{E}\left[\left(X_1+\dots +X_n\right)^2\right]\le n \sum_{i=1}^n \mb{E}[X_i^2]
\end{equation}
\end{lemma}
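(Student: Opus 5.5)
The statement to prove is \cref{lem:second-mom}: $\mb{E}[(X_1+\dots+X_n)^2]\le n\sum_{i=1}^n \mb{E}[X_i^2]$. The plan is to apply Cauchy--Schwarz pointwise and then take expectations; no independence or integrability assumptions beyond those implicit in the statement are needed.

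First, fix an outcome $\omega$. Write the finite sum as a Euclidean inner product,
\[
\sum_{i=1}^n X_i(\omega)=\langle (1,\dots,1),\,(X_1(\omega),\dots,X_n(\omega))\rangle .
\]
Cauchy--Schwarz in $\mb{R}^n$ then gives the deterministic inequality
\[
\Bigl(\sum_{i=1}^n X_i(\omega)\Bigr)^2\le n\sum_{i=1}^n X_i(\omega)^2 .
\]
Equivalently, this is Jensen's inequality for the convex map $x\mapsto x^2$ applied to the uniform average $\frac1n\sum_i X_i$.

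Second, integrate this inequality against the underlying probability measure. Both sides are nonnegative, so monotonicity and linearity of expectation apply in $[0,\infty]$. The result is
\[
\mb{E}\Bigl[\Bigl(\sum_{i=1}^n X_i\Bigr)^2\Bigr]\le n\sum_{i=1}^n\mb{E}[X_i^2].
\]
If some $\mb{E}[X_i^2]=\infty$, the right-hand side is infinite and the bound holds trivially.

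There is no substantive obstacle here. The only point needing care is that the argument works at the level of extended-valued expectations of nonnegative quantities, so no finiteness assumption has to be verified.
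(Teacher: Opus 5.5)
Your proof is correct and is the same argument the paper intends: the paper gives no separate proof and just calls the lemma ``a corollary of Cauchy--Schwarz.'' That is exactly your pointwise bound $(\sum_i X_i)^2\le n\sum_i X_i^2$ followed by taking expectations, and your remark that this works for expectations of nonnegative quantities valued in $[0,\infty]$ is a detail the paper leaves implicit.
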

Under i.i.d. and mean zero assumptions, we reduce the factor of $n$ to a constant $2$ on the right.

 \begin{theorem}[von Bahr-Esseen]
\label{thm:vbe}
If $X_i$ are independent with mean zero and $p\in [1, 2]$, then
\[ \mb{E}\left[| X_1+\dots +X_n|^p\right] \le 2 \sum_{i=1}^n \mb{E}\left[|X_i|^p\right].\]
Hence, if $X_i$ are iid copies of $X$, then \[\Vert X_1+\dots +X_n\Vert_p \le (2n)^{1/p} \Vert X\Vert_p.\]
\end{theorem}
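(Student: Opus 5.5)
The plan is to prove the inequality in three steps: symmetrize, use a Fourier (characteristic function) representation of $|x|^p$ to handle the symmetric sum, and then undo the symmetrization. The point is to absorb the symmetrization loss as a factor exactly $2$, not $2^p$. The endpoint $p=2$ is immediate, since independence and mean zero give $\mb{E}|S_n|^2=\sum_i\mb{E}|X_i|^2$. So we fix $p\in[1,2)$, write $S_n=X_1+\dots+X_n$, and assume every $\mb{E}|X_i|^p<\infty$ (otherwise there is nothing to prove).

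\emph{Representation.} For $p\in(0,2)$ there is a finite constant $c_p>0$ with
\[
|x|^p=c_p\int_{\mb{R}}\frac{1-\cos(tx)}{|t|^{1+p}}\,dt\qquad\text{for all }x\in\mb{R}.
\]
This follows by substituting $u=tx$: the integral $\int(1-\cos u)|u|^{-1-p}\,du$ converges at $0$ because $p<2$ and at $\infty$ because $p>0$. The integrand is nonnegative, so Tonelli gives, for any random variable $Y$ with characteristic function $\varphi_Y$,
\[
\mb{E}|Y|^p=c_p\int\frac{1-\on{Re}\varphi_Y(t)}{|t|^{1+p}}\,dt .
\]

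\emph{Step 1 (symmetrization).} Let $X_i'$ be independent copies of the $X_i$, independent of everything else, and set $S_n'=\sum_i X_i'$ and $Z_i=X_i-X_i'$. Since $\mb{E}S_n'=0$, $S_n'$ is independent of $S_n$, and $|\cdot|^p$ is convex for $p\ge1$, conditional Jensen gives
\[
\mb{E}|S_n|^p=\mb{E}\bigl|\mb{E}[S_n-S_n'\mid S_n]\bigr|^p\le\mb{E}|S_n-S_n'|^p=\mb{E}\Bigl|\sum_i Z_i\Bigr|^p .
\]
This is the only place where $p\ge 1$ and the mean-zero hypothesis are used.

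\emph{Step 2 (symmetric sum).} Each $Z_i$ is symmetric, so its characteristic function is real: $a_i(t)\coloneqq\varphi_{Z_i}(t)=|\varphi_{X_i}(t)|^2\in[0,1]$. By independence, $\varphi_{\sum Z_i}=\prod_i a_i$. For $a,b\in[0,1]$ we have $1-ab\le(1-a)+(1-b)$, because the difference is $(1-a)(1-b)\ge0$. Inducting on $n$, using that the partial products stay in $[0,1]$, gives $1-\prod_i a_i\le\sum_i(1-a_i)$ pointwise in $t$. Integrating against $c_p|t|^{-1-p}\,dt$ yields
\[
\mb{E}\Bigl|\sum_i Z_i\Bigr|^p\le\sum_i\mb{E}|Z_i|^p .
\]

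\emph{Step 3 (undoing symmetrization with constant $2$).} This is the step where the naive bound $|a-b|^p\le2^{p-1}(|a|^p+|b|^p)$ would only give $2^p$, so we work at the level of characteristic functions instead. With $\varphi=\varphi_{X_i}(t)$,
\[
2\bigl(1-\on{Re}\varphi\bigr)-\bigl(1-|\varphi|^2\bigr)=1-2\on{Re}\varphi+|\varphi|^2=|1-\varphi|^2\ge0 .
\]
Hence $1-\varphi_{Z_i}(t)\le 2(1-\on{Re}\varphi_{X_i}(t))$ for every $t$. Integrating gives $\mb{E}|Z_i|^p\le2\,\mb{E}|X_i|^p$.

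Chaining Steps 1–3 gives $\mb{E}|S_n|^p\le2\sum_i\mb{E}|X_i|^p$. For i.i.d.\ copies of $X$ the right-hand side equals $2n\,\mb{E}|X|^p$, and taking $p$-th roots gives $\Vert S_n\Vert_p\le(2n)^{1/p}\Vert X\Vert_p$.

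The main obstacle is obtaining the factor $2$ rather than $2^p$ in Step 3. The elementary identity $|1-\varphi|^2\ge0$ resolves it, and the only technical checks left are finiteness of $c_p$ and the Tonelli interchange, both of which hold because the integrands are nonnegative.
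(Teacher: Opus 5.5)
Your proof is correct and complete. The paper does not prove this statement: it quotes the inequality and defers to \cite{vbe}. Your argument is therefore a self-contained replacement for a citation, not a variant of an in-paper proof.

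Each ingredient checks out:
\begin{itemize}
\item conditional Jensen with $\mathbb{E}S_n'=0$, the only place where $p\ge 1$ and centering are used;
\item the representation $|x|^p=c_p\int(1-\cos(tx))|t|^{-1-p}\,dt$ for $0<p<2$, together with Tonelli;
\item the pointwise bound $1-\prod_i a_i\le\sum_i(1-a_i)$ for $a_i=|\varphi_{X_i}|^2\in[0,1]$;
\item the identity $2(1-\operatorname{Re}\varphi)-(1-|\varphi|^2)=|1-\varphi|^2$, which is exactly what keeps the constant at $2$ instead of the $2^p$ that a pointwise bound on $|X_i-X_i'|^p$ would give.
\end{itemize}

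Compared with the cited literature, your route gives up some generality. It uses full independence essentially, through the factorization of the characteristic function of $\sum_i Z_i$. So it does not reach the martingale-difference form in which the inequality is also commonly stated, nor the slightly sharper constant $2-1/n$ often quoted for independent centered summands.

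None of that is needed here. The paper only applies the theorem, with constant $2$, to conditionally independent centered summands:
\begin{itemize}
\item the i.i.d.\ $A_d^{(m)}$ in the proof of \cref{prop:algo-2};
\item the independent but non-identically distributed terms $A_d^{(n,m)}-\mathbb{E}A_d^{(n,m)}$ in the proof of \cref{prop:algo-3}.
\end{itemize}
Your proof covers both, since you never assumed identical distributions.

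A small bonus is that your Step 3 shows $\mathbb{E}|X-X'|^p\le 2\,\mathbb{E}|X|^p$ for every $p\in(0,2)$ with no centering assumption. This cleanly isolates the role of the hypotheses $p\ge 1$ and mean zero in Step 1.
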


Of course, one may put better constant $c_p$ and it is increasing in $p$ with $c_2=2$. The proof and further discussions on this inequality can be found in \cite{vbe}.
Now, we sketch the key backwards induction strategy used throughout the paper and defer the details to \cref{sec:app-algo-2}.

\begin{proof}[Proof Sketch of \cref{prop:algo-2}.]
We induct backwards on $d$. The base of $d=D$ holds via naive Monte Carlo. given the $d+1$ case, we show the case of $d$. The expected sample complexity $C_d$ obeys a recursion which solves to the desired bound: it is the sum of the cost for $M_d$ samples of $N$ and the expected cost over $N$ of running $R_{d+1}$ twice with error roughly $2^{-N/2}$.
Now, for the bias, we condition on $N\sim \mb{P}_d$ to obtain a telescoping series, so that \[ \mb{E}\left[ R_d(y_{<d}, \varepsilon)\middle| y_{<d}\right] = \sum_{n\in\on{supp}(\mb{P}_d)}\mb{E}\left[{\Delta_d(y_{\le d}, n)}\middle| y_{<d}\right]\]
If $\mb{P}_d = \on{Geo}(r_d)$, we induct to show that the algorithm is unbiased since the series is infinite. For truncated $\mb{P}_d$, we have an extra term upon telescoping, so it is biased. We bound $\beta_d$ using the Lipschitz condition of $g_d$ and applying inductive hypothesis regarding the moments with $\varepsilon =2^{-B_d/2}$.

For the moments bound, we first use \cref{thm:vbe} on the i.i.d. copies $A_d^{(m)}$ indexed by $m\in [M_d]$, to reduce upper bounding $\sigma_d(\varepsilon)^{p_d}$ to upper bounding $\Vert \Delta_d(y_{\le d}, n)\Vert_{p_d}$ for each $n$. Then, we use triangle inequality on \cref{eq:Delta_d} to insert $g_d$ evaluated not that recursive output of $R_{d+1}$ but true mean $\gamma_{d+1}$, and apply the inductive hypothesis on the moments. Skipping some computation, we sum over $n$ to obtain
\[ \sigma_d(\varepsilon)^{p_d}\lesssim L_d^{p_d} M_d^{1-p_d}\sum_{n=0}^\infty \mb{P}_d(n)^{1-p_d} (2^{-n/2})^{p_d}\]
With careful analysis, we see that this is at most $\varepsilon^{p_d}$ provided $M_d$ is sufficiently large, per \cref{eq:BdMd}.
\end{proof}

\subsection{Derandomized MLMC}
Now, we derandomize the level scheduling Algorithm 2 to obtain Algorithm 3. It has  deterministic levels as in classical MLMCs. Essentially, the absorption of the error parameter into Algorithm 2 allows us to specify the level $N^{(m)}$ of each trial individually, and the truncation of $\mb{P}_d$ at $B_d$ allows us choose roughly $M_d\mb{P}_d(n)$ many trials at level $n$.

Again, we assume that the number of nestings $D$, functions $g_1, \dots, g_D$, and $\delta >0$ are given, as well as the distribution $\mb{P}_d(n)\propto (1-r_d)^n\mbf{1}\{0\le n\le B_d\}$ from \cref{prop:algo-2} including $r_d$ and $B_d$, the leading constant of $M_d$ from \cref{prop:algo-3}, and $\Delta_d(y_{\le d}, n)$ from \cref{eq:Delta_d}.
\begin{proposition}
\label{prop:algo-3}
Under LBL assumptions (\cref{def:lbl})  and any $\delta \in (0, 1/2)$, let $p_d\coloneqq 1-\delta/2^d$ and let $r_d$ be chosen as in \cref{lem:numerics}. 
For each $0\le d\le D$, there exists some $B_d, M_d$ such that
\begin{equation} \label{eq:BdMd}
\begin{aligned}
B_d &\coloneqq \Theta\left(\log\left({\varepsilon}^{-1}\right)\right) \\
    M_d &\coloneqq \Theta\left(\varepsilon^{-2\left(1+\delta/2^{d-1}\right)}\right) 
\end{aligned}
\end{equation}
and that the following holds:
for any time step $d$, $\pi$-a.e. history $y_{<d}$, error $\varepsilon\in (0, 1)$, and the truncated distribution
\begin{equation}
\mb{P}_d(n)\propto (1-r_d)^n\mbf{1}\{0\le n\le B_d\},
\end{equation}
then Algorithm 2 outputs an estimator $R_d(y_{<d}, \varepsilon)$ satisfying
\begin{enumerate}
	\item Bias: the mean $\mu_{d}(y_{<d})\coloneqq \mb{E}\left[R_d(y_{<d}, \varepsilon)\middle| y_{<d}\right]$ is independent of $\varepsilon$ and the bias	\[ \beta_d \coloneqq  \Vert \mu_d(y_{<d})-\gamma_d(y_{<d})\Vert_{ p_d}\le \frac{\varepsilon}{2}\left(1-\frac{d}{D}\right)  \]
	\item Cost: the expected sample complexity of $R_d(y_{<d}, \varepsilon)$ is 
	\[ C_d(\varepsilon) =O\left( \varepsilon^{-2\left(1+\delta/2^{d-1}\right)}\right)\]
	\item Moment: the ${p_d}$-th moment of $R_d(y_{<d}, \varepsilon)$ satisfies
	\[ \sigma_d(\varepsilon) \coloneqq \Vert R_d(y_{<d}, \varepsilon)-\mu_d(y_{<d})\Vert_{ p_d}\le \varepsilon\]
\end{enumerate}
Together, Algorithm 3 estimates $\gamma_d(y_{<d})$ with $L^{p_d}$-error at most $2\varepsilon$ and sample complexity $O(\varepsilon^{-2(1+\delta/2^{d-1})})$.
\end{proposition}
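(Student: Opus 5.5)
We prove \cref{prop:algo-3} by backward induction on $d$, mirroring the proof of \cref{prop:algo-2}. The only change is that Algorithm 3 no longer samples $N^{(m)}\sim\mb{P}_d$. Instead it fixes, for each level $0\le n\le B_d$, a deterministic count $M_{d,n}\coloneqq \lceil M_d\mb{P}_d(n)\rceil$ of trials at level $n$, and it returns
\[
\sum_{n=0}^{B_d}\frac{1}{M_{d,n}}\sum_{j=1}^{M_{d,n}}\Delta_d\bigl(y^{(n,j)}_{\le d},n\bigr),
\]
where all $y_d^{(n,j)}$ and all inner recursive calls are independent. The base case $d=D$ is plain Monte Carlo with $M_D$ samples. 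Its mean is $\gamma_D$, so $\beta_D=0$ and $\mu_D$ does not depend on $\varepsilon$. Its $L^{p_D}$ fluctuation is $\lesssim M_D^{1/p_D-1}\Vert g_D\Vert_{p_D}$ by \cref{thm:vbe}, and this is at most $\varepsilon$ for the stated $M_D$ since $g_D\in L^2\subset L^{p_D}$.

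For the inductive step, assume the $d+1$ statement. The bias then comes from telescoping. Since $\mu_{d+1}$ does not depend on $\varepsilon$, each summand $\mb{E}[\Delta_d(y_{\le d},n)\mid y_{<d}]$ is a difference of consecutive terms $h(n)-h(n-1)$, where $h(n)=\mb{E}[g_d(y_{\le d},R_{d+1}(y_{\le d},2^{-n/2}))\mid y_{<d}]$ and $h(-1)$ is read as $0$ by the level-$0$ convention of \eqref{eq:Delta_d}. The sum over $0\le n\le B_d$ therefore collapses to $h(B_d)$.

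\emph{Caveat:} $h(n)$ does depend on $n$ through the fluctuations of $R_{d+1}$, since $g_d$ is nonlinear. So I would prove independence of $\varepsilon$ only in the sense needed, namely that $\mu_d$ depends on $\varepsilon$ only through $B_d$. Equivalently, one can fix $B_d$ at the target accuracy and let the statement refer to this common mean; I would adopt this reading.

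Next compare $h(B_d)$ to $\gamma_d$ using the Lipschitz property of $g_d$:
\[
\bigl|h(B_d)-\gamma_d\bigr|\le L_d\,\mb{E}\bigl|R_{d+1}(\cdot,2^{-B_d/2})-\gamma_{d+1}\bigr|.
\]
Then Jensen's inequality and the inductive bias and moment bounds, taken in $L^{p_{d+1}}$ with $p_{d+1}\ge p_d$, give
\[
\beta_d\le L_d\bigl(\beta_{d+1}+2^{-B_d/2}\bigr)\lesssim 2^{-B_d/2}.
\]
Choosing $B_d=C\log_2(\varepsilon^{-1})$ with a large constant $C$ makes this at most $\tfrac{\varepsilon}{2}(1-\tfrac dD)$. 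Some care is needed at $d=D-1$, where the target is $\varepsilon/(2D)$, but this only changes constants.

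The cost is bounded as follows. Level $n$ invokes $R_{d+1}$ twice at accuracy about $2^{-n/2}$, so by induction its cost is $O(2^{n(1+\delta/2^{d})})$. Hence
\[
C_d\lesssim\sum_{n\le B_d}\bigl(M_d\mb{P}_d(n)+1\bigr)\,2^{n(1+\delta/2^d)}.
\]
The term $\sum_n M_d(1-r_d)^n2^{n(1+\delta/2^d)}$ is $O(M_d)$ by \cref{lem:numerics}, because $(1-r_d)2^{1+\delta/2^{d}}\le(1-r_d)2^{1+\delta/2^{d+1}}\cdot 2^{\delta/2^{d+1}}$, and I would check that the lemma's slack covers this; if it does not, I would re-pick $r_d$ via a version of the lemma with this exponent. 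The term coming from the ceiling is $\sum_{n\le B_d}2^{n(1+\delta/2^d)}\lesssim \varepsilon^{-C(1+\delta/2^d)}$. This is $O(M_d)$ provided $C(1+\delta/2^d)\le 2(1+\delta/2^{d-1})$, and this constraint on $C$ must be balanced against the bias requirement above.

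For the moment, the summands are independent across $(n,j)$ once centered. Applying \cref{thm:vbe} twice, first within each level and then across levels, gives
\[
\sigma_d^{p_d}\le 2\sum_n M_{d,n}^{1-p_d}\bigl\Vert\Delta_d(\cdot,n)-\mb{E}\Delta_d\bigr\Vert_{p_d}^{p_d}.
\]
By the triangle inequality through $\gamma_{d+1}$, the Lipschitz bound and the induction hypothesis, $\Vert\Delta_d(\cdot,n)\Vert_{p_d}\lesssim L_d2^{-n/2}$. Since $M_{d,n}\ge M_d\mb{P}_d(n)$ and $1-p_d<0$, this reproduces exactly the random-level bound
\[
\sigma_d^{p_d}\lesssim M_d^{1-p_d}\sum_n\mb{P}_d(n)^{1-p_d}2^{-np_d/2},
\]
and the same geometric-series check via \cref{lem:numerics} as in \cref{prop:algo-2} gives $\sigma_d\le\varepsilon$.

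The main obstacle is reconciling the choice of $B_d$: it must be large enough to kill the truncation bias, yet the ceiling overhead $\sum_{n\le B_d}2^{n(1+\delta/2^d)}$ must stay below $M_d$. I would first try to make the two constants compatible using the factor-$2$ room in $M_d$ relative to $\varepsilon^{-2}$. If that is too tight, I would fall back on dropping the levels with $M_d\mb{P}_d(n)<1$, since those levels contribute a tail bias that is already $\lesssim 2^{-n/2}$-small.
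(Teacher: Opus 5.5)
Your skeleton matches the paper's: backward induction, deterministic counts $\approx M_d\mb{P}_d(n)$ per level, and reduction to the truncated bound of \cref{prop:algo-2}. Two of your side remarks are correct. The inner cost exponent is $\delta/2^d$, not the $\delta/2^{d+1}$ used in \eqref{eq:cost-recur}. And $\mu_d=h(B_d)$ does depend on $\varepsilon$ through $B_d$. Writing $x\coloneqq\delta/2^d$, \cref{lem:numerics} gives $1-r_d=2^{-(2+x)/(2-x)}$, hence $(1-r_d)2^{1+x}=2^{-x^2/(2-x)}<1$, so your cost sum converges. The gap is in the moment step. The ratio of the series $\sum_n\mb{P}_d(n)^{1-p_d}2^{-np_d/2}$ is $(1-r_d)^{-(1-x)}2^{-(1-x/2)}$; the corresponding display in the proof of \cref{prop:algo-2} writes $2^{-1-x/2}$, a sign slip. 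For this $r_d$ the ratio equals $2^{x(2-3x)/(2(2-x))}>1$. Re-picking $r_d$ cannot fix this. Convergence of the cost series needs $1-r_d<2^{-(1+x)}$, convergence of the moment series needs $1-r_d>2^{-(2-x)/(2(1-x))}$, and these are compatible only when $x>1/2$. So ``the same geometric-series check'' does not give $\sigma_d\le\varepsilon$.

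What rescues the bound is the truncation. The growing sum is dominated by its top term, so $\sigma_d^{p_d}\lesssim\bigl(M_d(1-r_d)^{B_d}\bigr)^{1-p_d}\bigl(2^{-B_d/2}\bigr)^{p_d}$. With $2^{-B_d/2}\asymp\varepsilon$ you must therefore show $M_d(1-r_d)^{B_d}\ge K$ for a suitable constant $K$. That inequality is exactly the content of the paper's proof. There it is used to guarantee $\lfloor M_d\mb{P}_d(n)\rfloor\ge 1$ for every $n\le B_d$; Algorithm 3 uses the floor, not the ceiling you assumed, so otherwise some levels would receive no samples. With $B_d=2\log_2(2L_d/\varepsilon)$ and $M_d=c\,\varepsilon^{-2(1+2x)}$ one gets $(1-r_d)^{B_d}M_d=c\,(2L_d)^{-2(2+x)/(2-x)}\varepsilon^{2[(2+x)/(2-x)-(1+2x)]}\ge c\,(2L_d)^{-2(2+x)/(2-x)}$, since $(2+x)/(2-x)\le 1+2x$ for $x\le 1$. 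Your proposal never establishes this, and switching to the ceiling does not let you avoid it. The same choice of $B_d$ also closes the ``main obstacle'' you leave open. The multiplicative constant $2$, plus an additive constant for your bias target $\frac{\varepsilon}{2}(1-\frac{d}{D})$, already handles the bias. The ceiling overhead is then $\sum_{n\le B_d}2^{n(1+x)}\lesssim(2L_d/\varepsilon)^{2(1+x)}=O(M_d)$, so no levels need to be dropped and no ``large'' $C$ is needed.
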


\begin{algorithm}

\caption{Classical MLMC Algorithm for RNE}
\label{alg:rmlmc3}
 \KwIn{Time step $d\in\{0, 1, \dots, D\}$ (implicitly), history $y_{<d}$, and error parameter $\varepsilon>0$}
 \KwOut{Estimator $R_d(y_{<d}, \varepsilon)$ of $\gamma_d(y_{<d})$ with $L^{p_d}$ error at most $\varepsilon$ for $p_d \coloneqq 2-\delta/2^d$}
 $M_d\gets \Theta(\varepsilon^{-2(1+\delta/2^{d-1})})$ from \cref{prop:algo-3} 
 
 $B_d\gets \Theta(\log(\varepsilon^{-1}))$
 
 $r_d\gets$ chosen via \cref{lem:numerics}
 
 $\mb{P}_d(n)\propto (1-r_d)^{n}\mbf{1}\{0\le n\le B_d\}$

\If{$d=D$}{
	\For{$1\le m\le M_D$}{Independently sample $y_D^{(m)}\sim \pi \mid y_{<D}$
    
	$y_{\le D}^{(m)}\gets (y_{<D}, y_D^{(m)})$ 
    
    \Return $\frac{1}{M_D}\sum_{m=1}^{M_D} g_D\left(y_{\le D}^{(m)}\right)$}
}
\Else{
 \For{$0\le n\le B_d$}{
$M_d^{(n)}\gets \lfloor M_d\mb{P}_d(n)\rfloor$ 

\For{$1\le m\le M_d^{(n)}$}{
Independently sample $y_d^{(n, m)}\sim \pi \mid y_{<d}$ 

$y_{\le d}^{(n, m)}\gets (y_{<d}, y_d^{(n, m)})$ 

$A_d^{(n, m)}\gets \Delta_d\left(y_{\le d}^{(n, m)}, n\right)$ with $\Delta_d$ as in \cref{eq:Delta_d}
}}

\Return $\sum_{n=0}^{B_d}\frac{1}{M_d^{(n)}}\sum_{m=1}^{M_d^{(n)}} A_d^{(n, m)}$
}
\end{algorithm}

Now, \cref{thm:main-derand} follows immediately. This proposition resembles \cref{prop:algo-2} closely. We explain the reduction here and defer the detailed proof to \cref{sec:app-algo-3}.

\begin{proof}[Proof Sketch.]
To reduce to \cref{prop:algo-2}, we will show that the bias, sample complexity, and moments are bounded by the corresponding quantities in Algorithm 2 with truncated $\mb{P}_d$, up to constant factors in $\delta, d, D$ and $L_d$. Then, applying that proof of \cref{prop:algo-2} suffices. 

For the bias, since $y_{d}^{(m, n)}$ and $A_d^{(n, m)}$ are iid copies, we get
\begin{equation}
\begin{aligned}
\mu_d(y_{\le d}) & =\mb{E}\left[R_d(y_{\le d}, \varepsilon) \middle|y_{<d}\right]
\\ & = \sum_{n=0}^{B_d}\frac{1}{M_d^{(n)}}\sum_{m=1}^{M_d^{(n)}} \mb{E}A_d^{(n, m)} 
\\ & = \sum_{n=0}^{B_d}\Delta _d(y_{\le d}, n)  
\end{aligned}
\end{equation}
which is the same as Algorithm 2 with truncated $\mb{P}_d$. The same is true for the sample complexity bound.
The only caveat is at line 30 where take the floor $\lfloor M_d\mb{P}_d(n)\rfloor$ of the counterpart in Algorithm 2. For the reduction to work, we need to show that this effect is only a constant factor, i.e. $M_d\mb{P}_d(n)\ge 1$ for every $0\le n\le B_d$. For our choice of $B_d$, this is done carefully in \cref{sec:app-algo-3} by taking the implicit constant factor in $M_d$ in \cref{eq:BdMd} to be $c = (2L_d)^{2+\delta/2^{d-2}}$.
\end{proof}

\section{Quantum Algorithms}
\label{sec:quantum}
\subsection{Quantum-Accelerated Monte Carlo}
We introduce the key subroutine using quantum amplitude estimation that drives our quantum speedup. We cite the state-of-the-art version before deriving a useful corollary.

\begin{theorem}[Quantum-Accelerated Monte Carlo \cite{ko}]
\label{thm:ko-qmc}
Given the code of random variable $X$ with standard deviation $\sigma$, for every $\varepsilon, \delta >0$ there exists a quantum algorithm with sample complexity at most $O\left(\frac{\sigma}{\varepsilon}\log\left(\frac{1}{\delta}\right)\right)$ that outputs estimator $\hat{X}$ such that
\[ \mb{P}\left(|\hat{X}-\mb{E}X|>\varepsilon\right)\le \delta.\]
\end{theorem}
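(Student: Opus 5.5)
Since this is a cited result from \cite{ko}, I would reconstruct its proof in three stages. First, reduce to constant failure probability. Second, reduce to a centered, normalized variable. Third, run a phase-estimation-based estimator whose cost is $O(\sigma/\varepsilon)$.

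\textbf{Boosting the confidence.} The reduction to $\delta=1/3$ is the median trick. Suppose an algorithm $\mathcal A$ outputs $\hat X$ with $\mb{P}(|\hat X-\mb{E}X|>\varepsilon)\le 1/3$ using $O(\sigma/\varepsilon)$ samples. Run $k=O(\log(1/\delta))$ independent copies and return the median. The median is off by more than $\varepsilon$ only if at least $k/2$ runs fail. By a Chernoff/Hoeffding bound on $\on{Bin}(k,1/3)$, this happens with probability at most $e^{-ck}\le\delta$. The cost is multiplied by $\log(1/\delta)$, as claimed.

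\textbf{Normalization.} Next I would make $X$ centered and scale-free. Draw $O(1)$ classical samples and take a rough location estimate $m$, such as a median. By Chebyshev, $|m-\mb{E}X|\le 2\sigma$ with probability at least $0.99$. Set $Y=(X-m)/\sigma$, so that $\mb{E}Y^2=O(1)$, and estimate $\mb{E}Y$ to additive accuracy $\eta=\varepsilon/\sigma$ with constant probability using $O(1/\eta)$ quantum samples. If $\sigma$ is unknown, a standard doubling search over the scale, or a quantum variance pre-estimate, handles this at constant-factor overhead. This step is routine bookkeeping.

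\textbf{Core estimator.} For $\mb{E}Y$ with $\mb{E}Y^2=O(1)$, the naive route is Montanaro's. Split $Y$ into dyadic slices $\{2^{j-1}<|Y|\le 2^j\}$. Discard $|Y|>1/\eta$, which introduces bias at most $\mb{E}[|Y|\mathbf 1\{|Y|>1/\eta\}]\le \eta\,\mb{E}Y^2$. On each slice, estimate the rescaled bounded mean with amplitude estimation \cite{qamc}. Allocating accuracy across slices gives $O(\eta^{-1}\log^{3/2}(1/\eta))$. The main obstacle is removing these logarithmic factors.

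To remove them, I would follow the Kothari--O'Donnell idea. Encode the whole heavy-tailed mean at once into the eigenphase of a single unitary, instead of estimating the slices separately. Concretely, build from the sampling oracle a unitary $U=\exp(i\,\theta(Y))$-type reflection product. The function $\theta$ is chosen so that $\mb{E}[\theta(Y)]$ is linear in $\mb{E}Y$ up to an error controlled by $\mb{E}Y^2$; for example $\theta(y)=\arctan$-type smoothing of $\eta y$. One then verifies that this relevant eigenphase is shifted from a reference value by $\Theta(\eta\,\mb{E}Y)$, plus $O(\eta^2\mb{E}Y^2)=O(\eta^2)$ corrections. Phase estimation with $O(1/\eta)$ applications of $U$ then resolves that shift to accuracy $\Theta(\eta)$ with constant probability.

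The delicate points are two. First, the nonlinearity of the smoothing must contribute only second-order bias, bounded via the variance. Second, the spectral gap between the target eigenvalue and the rest must be preserved, so that phase estimation is not confused by nearby eigenvalues. I expect verifying this spectral claim to be the hardest step. If it resists, I would fall back on the dyadic-slice argument, which yields the theorem up to a polylogarithmic factor. That weaker bound still suffices for the $\tilde O(\varepsilon^{-1})$ claims of \cref{thm:main-wcc}.
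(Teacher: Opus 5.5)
\textbf{Verdict: genuine gap.} The paper gives no proof of this statement; it imports it from \cite{ko} and uses it only as a black box in \cref{cor:qmc}. Your median-trick boosting and your normalization are fine. The gap is in the core estimator, which is the whole content of the theorem.

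\textbf{The core estimator.} The problem is not only that you leave the spectral claim unverified. The quantitative claims you do make cannot yield $O(1/\eta)$. Suppose the eigenphase is $\phi_0+\Theta(\eta\,\mb{E}Y)+O(\eta^2)$. Reading it to additive accuracy $a$ determines $\mb{E}Y$ only to accuracy $a/\eta+O(\eta)$.
\begin{itemize}
\item With $O(1/\eta)$ applications of $U$ you get $a=\Theta(\eta)$, which pins down $\mb{E}Y$ only to $O(1)$.
\item To reach the target accuracy $\eta$ you need $a=O(\eta^2)$. Phase estimation cannot beat accuracy $\Theta(1/M)$ with $M$ uses, so this costs $\Omega(\eta^{-2})$ applications, i.e.\ the classical rate $\sigma^2/\varepsilon^2$.
\item Changing the smoothing scale does not help. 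At scale $\lambda$ the shift is $\lambda\,\mb{E}Y$ and the variance-controlled nonlinearity error is $O(\lambda^2\mb{E}Y^2)$. You then need $\lambda\lesssim\eta$, and hence phase accuracy $\lambda\eta\lesssim\eta^2$.
\end{itemize}
So any scheme whose eigenphase equals $\lambda\,\mb{E}Y$ up to an error of order $\lambda^2\mb{E}Y^2$ is stuck at $\eta^{-2}$. The quadratic speedup must come from a different mechanism. In amplitude estimation, the Grover eigenphase behaves like $2\sqrt{a}$ for a small amplitude $a$. That is why each dyadic slice in Montanaro's argument costs only about $\sqrt{\mb{P}(\text{slice})}$ divided by its target accuracy. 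Producing such a mechanism for the whole heavy-tailed variable at once, without logarithmic loss, is precisely the new idea of \cite{ko}, and your sketch does not supply it.

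\textbf{The fallback.} Your fallback correctly proves a weaker statement, roughly $O\bigl(\frac{\sigma}{\varepsilon}\log^{3/2}(\frac{\sigma}{\varepsilon})\log\log(\frac{\sigma}{\varepsilon})\log\frac1\delta\bigr)$, not the theorem as stated. It preserves the $\tilde O(\varepsilon^{-1})$ conclusions but not the paper's explicit exponents. With $\delta\asymp\varepsilon^2/s^2$ in \cref{cor:qmc}, the factor $\log(s/\varepsilon)$ becomes about $\log^{5/2}(s/\varepsilon)$, up to $\log\log$ factors. The power $3(D-d)+1$ in \cref{prop:quantum-wcc} and \cref{thm:main-wcc} would therefore increase.

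\textbf{Unknown $\sigma$.} Handling unknown $\sigma$ ``by doubling at constant-factor overhead'' is not routine, since you have no cheap test that certifies a guessed scale. The paper only needs the case where an upper bound $s\ge\sigma$ is supplied, so you could simply assume that.
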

\begin{corollary}
\label{cor:qmc}
	Suppose $\mb{E}[X^2]\le s^2 <\infty$ is bounded with $s$ given as well as the code of $X$, then for every $\varepsilon>0$ there exists a quantum algorithm that outputs estimator $\hat{X}$ of $\mb{E}X$ with RMSE at most $\varepsilon$ and sample complexity at most 
    \[
    O\left(\frac{s}{\varepsilon}\log\left(\frac{s}{\varepsilon}\right)\right).
    \]
\end{corollary}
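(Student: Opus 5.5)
We derive \cref{cor:qmc} from \cref{thm:ko-qmc} with a confidence level $\delta$ tuned so that the rare failures cost little in mean square. Since $\sigma^2=\mb{E}[X^2]-(\mb{E}X)^2\le s^2$, we may use $s$ in place of $\sigma$ in \cref{thm:ko-qmc}. Run that algorithm with accuracy $\varepsilon/2$ and failure probability $\delta$ to be chosen. The output $\hat X$ is controlled on the good event $\{|\hat X-\mb{E}X|\le \varepsilon/2\}$, but on the bad event $\hat X$ could be arbitrary, which would spoil the RMSE. So we first \emph{clip}: with $|\mb{E}X|\le s$ by Jensen, replace $\hat X$ by its projection $\tilde X$ onto $[-s,s]$. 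Projection onto an interval containing $\mb{E}X$ is $1$-Lipschitz and fixes $\mb{E}X$, so $|\tilde X-\mb{E}X|\le |\hat X-\mb{E}X|$ always, and $|\tilde X-\mb{E}X|\le 2s$ always.

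Next split the MSE over the two events:
\[
\mb{E}\left[(\tilde X-\mb{E}X)^2\right]\le \frac{\varepsilon^2}{4}+4s^2\delta .
\]
Choose $\delta=\varepsilon^2/(16 s^2)$ (assuming $\varepsilon<s$; otherwise output $0$, whose RMSE is $|\mb{E}X|\le s\le\varepsilon$ at zero cost). This gives MSE at most $\varepsilon^2/2\le\varepsilon^2$. The cost from \cref{thm:ko-qmc} is $O\left(\frac{s}{\varepsilon/2}\log\frac{16s^2}{\varepsilon^2}\right)=O\left(\frac{s}{\varepsilon}\log\frac{s}{\varepsilon}\right)$, and clipping is free because function evaluation costs $0$.

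The only delicate point is the unbounded error on the failure event. Clipping handles it, using the known bound $s$ (which is why $s$ must be given). If one prefers not to clip, an alternative is a median-of-means style argument. Clipping is simpler, however, and everything else is routine.
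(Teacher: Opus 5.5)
Your proposal is correct and follows the same route as the paper's proof: apply \cref{thm:ko-qmc} with accuracy $\varepsilon/2$ and failure probability of order $\varepsilon^2/s^2$, clip into $[-s,s]$ using $|\mb{E}X|\le s$, and split the MSE into a good-event term of at most $\varepsilon^2/4$ and a failure term of at most $(2s)^2\delta$. The differences are cosmetic: you take $\delta=\varepsilon^2/(16s^2)$ where the paper takes $\varepsilon^2/(8s^2)$, and you also handle the trivial regime $\varepsilon\ge s$ (output $0$), which the paper leaves implicit.
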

\begin{proof}[Proof Sketch.]
We apply \cref{thm:ko-qmc} to algorithmically compute an estimator $\tilde{X}$ such that \[ \mb{P}\left(|\tilde{X}-\mb{E}X|>\frac{\varepsilon}{2}\right)\lesssim \frac{\varepsilon^2}{s^2}\]
Then, we output $\hat{X}$ by clipping $\tilde{X}$ into $[-s, s]$ since we know the true mean $\mb{E}X$ lies there. Then, we can show that $\hat{X}$ has the desired RMSE, and its sample complexity is identical to that of $\tilde{X}$.
\end{proof}

As shown, upgrading $(\varepsilon,\delta)$-type error to expected error like RMSE is straightforward if we assume bounded raw moments. Indeed, \cite{qmc} derives analogs of \cref{thm:ko-qmc} and \cref{cor:qmc} with $p$-th raw moment and $L^p$-error for $p\in (1, 2)$, and we follow their proof.

With only bounded central moments instead, e.g. the variance, our proof no longer works since we have no bounds $s$ for $\tilde{X}$. However, such a upgrade is still possible: in Algorithm 1 of \cite{chenyi} does so via a clever trick to compare the QAMC estimate with classical estimates, essentially replacing our bound $s$ with one estimated classically. However, this trick not necessary for us since we will only apply QAMC to estimate level differences in MLMC, which by design have small mean, so the weaker corollary we presented is sufficient.

\subsection{The Direct Quantization}
\label{sec:quantum-ec}

We first present a natural direct quantization of the algorithm in \cite{rne}.
As in Algorithm 2, we use a geometric random variable to sample the level $N_d$ as in randomized MLMC utilized classically by \cite{rne} and \cite{muse}, and estimate the expectation over $N_d$. 
We present the pair of recursive algorithms.

\begin{algorithm}

\caption{Main Recursive Estimator for RNE}
\label{alg:failed-recursive-algo}
 \KwIn{Time step $d\in\{0, 1, \dots, D\}$, history $y_{<d}\in \mb{R}^{d}$, error parameter $\varepsilon>0$, distribution $\mb{P}_d$ on $\mb{N}$}
 \KwOut{Estimator $R_d(y_{<d}, \varepsilon)$ of $\gamma_d(y_{<d})$ with RMSE $\varepsilon$}
\If{$d=D$}{
	Apply Quantum-Accelerated Monte Carlo (\cref{cor:qmc}) to obtain an estimator $R_D(y_{<D}, \varepsilon)$ of
     \[
         \gamma_D(y_{<D})\coloneqq\mb{E}\left[g_D(y_{\le D})\middle|y_{<D}\right]
     \]with RMSE at most $\varepsilon$ 
}
\Else{
For $N_d\sim \mb{P}_d$ and $A_d$ from Algorithm 5, apply Quantum-Accelerated Monte Carlo (\cref{cor:qmc}) to obtain an estimator $R_d(y_{<d}, \varepsilon)$ of 
\[\mu_d(y_{<d})\coloneqq  \mb{E}_{y_d, N_d}\left[\frac{A_{d}(y_{\le d}, N_d)}{\mb{P}_d(N_d)}\middle\vert y_{<d}\right]\]
with RMSE at most $\varepsilon$
}
\Return $R_d(y_{<d}, \varepsilon)$ 
\end{algorithm}

\begin{algorithm}

\caption{Auxiliary Estimator for Algorithm 4}
\label{alg:failed-recursive-algo}
 \KwIn{Time step $d\in\{0, 1, \dots, D-1\}$, history $y_{\le d}\in \mb{R}^{d+1}$, level $n\in\mb{N}$, accuracy rate $\alpha>0$}
 \KwOut{Estimator $A_d(y_{\le d}, n)$}
 
Call Algorithm 4 to compute estimators of $\gamma_{d+1}(y_{\le d})$

\hspace{4 ex} $Z_n\gets R_{d+1}\left(y_{\le d}, \alpha^n\right)$

\hspace{4 ex} $Z_{n-1}\gets R_{d+1}\left(y_{\le d}, \alpha^{n-1}\right)$

\Return $g_d(y_{\le d}, Z_n)- g_d(y_{\le d}, Z_{n-1})$ 
\end{algorithm}

As before, we define the bias of $R_d$ and second moment of the object we run QAMC on in $R_d$:
\begin{align*}
\beta_d^2 &\coloneqq \mb{E}\left|\mu_d(y_{<d})-\gamma_d(y_{<d})\right|^2\\
\sigma_d(y_{<d})^2 &\coloneqq \mb{E}\left[\left|\frac{A_{d}(y_{\le d}, N_d)}{\mb{P}_d(N_d)}\right|^2\middle\vert y_{<d}\right]
\end{align*}
We analyze the algorithms to obtain the following guarantees, from which \cref{prop:main-ec} follows as the RMSE guarantee holds by the guarantee of the QAMC subroutine. 

\begin{proposition}
\label{prop:var-time}
For $r=1/2, \alpha =2/3$, define truncation
\begin{equation}
B_d\coloneqq \log_\alpha \left(\frac{\varepsilon}{\sqrt{D}L_0\dots L_d}\right).
\end{equation}
For any $0\le d\le D$ and $\varepsilon>0$, with distribution
\begin{equation}
\mb{P}_d(n)\propto (1-r)^n\mbf{1}\{0\le n\le B_d\},
\end{equation}
Algorithm 4 on $\pi$-a.e. history $y_{<d}$ returns an estimator $R_d(y_{<d}, \varepsilon)$ satisfying
\begin{enumerate}
    \item Bias: $\beta_d \le \varepsilon$,
    \item Moment: $\sigma_{d}(y_{<d})=O(1)$,
    \item Sample complexity is \[C_d(\varepsilon)=\Omega(\varepsilon^{D-d+1}\log^{D-d+1}(\varepsilon^{-1})).\]
\end{enumerate}
\end{proposition}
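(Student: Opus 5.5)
I will argue by backward induction on $d$, proving all three claims together. In the base case $d=D$ there is no nesting: Algorithm 4 runs \cref{cor:qmc} on $g_D(y_{\le D})$. This estimator is unbiased in the sense $\beta_D=0$, the moment is $\sigma_D=\Vert g_D\Vert_2=O(1)$ by LBL(3), and the cost is $O(\varepsilon^{-1}\log(\varepsilon^{-1}))$, which matches the claimed exponent $D-d+1=1$.

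\textbf{Bias.} For the inductive step at $d<D$, I condition on $N_d$. The importance weight $1/\mb{P}_d(N_d)$ cancels the sampling probability, so the sum telescopes:
\[
\mu_d(y_{<d})=\sum_{n=0}^{B_d}\mb{E}\left[g_d(y_{\le d},Z_n)-g_d(y_{\le d},Z_{n-1})\middle| y_{<d}\right].
\]
This sum collapses to $\mb{E}[g_d(y_{\le d},Z_{B_d})\mid y_{<d}]$, with the convention that the $n=0$ term contributes only $g_d(\cdot,Z_0)$. I then compare with $\gamma_d(y_{<d})=\mb{E}[g_d(y_{\le d},\gamma_{d+1}(y_{\le d}))\mid y_{<d}]$. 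By the Lipschitz property (LBL(2)) and Jensen,
\[
\beta_d\le L_d\,\Vert Z_{B_d}-\gamma_{d+1}(y_{\le d})\Vert_2 .
\]
The inductive RMSE guarantee for $R_{d+1}$ at accuracy $\alpha^{B_d}$ (bias plus QAMC error) bounds the right-hand side by $\lesssim\alpha^{B_d}$. Since $\alpha^{B_d}=\varepsilon/(\sqrt{D}L_0\cdots L_d)$, the product of Lipschitz constants absorbs the $L_d$ factor, and the $\sqrt D$ leaves room to sum squared errors over the $D$ levels. So I expect $\beta_d\le\varepsilon$ once the constants in the induction are tracked, for instance by carrying the RMSE bound of $R_{d+1}$ as $\varepsilon$ itself.

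\textbf{Moment.} By the Lipschitz property,
\[
|A_d(y_{\le d},n)|\le L_d|Z_n-Z_{n-1}|\le L_d\bigl(|Z_n-\gamma_{d+1}|+|Z_{n-1}-\gamma_{d+1}|\bigr),
\]
and induction gives $\mb{E}|A_d(\cdot,n)|^2\lesssim L_d^2\alpha^{2(n-1)}$ (using \cref{lem:second-mom}). With $\mb{P}_d(n)\asymp 2^{-n}$ (normalization bounded since $r=1/2$),
\[
\sigma_d^2=\sum_n \mb{P}_d(n)^{-1}\mb{E}|A_d(\cdot,n)|^2\lesssim\sum_n 2^n(4/9)^n=\sum_n(8/9)^n=O(1).
\]
This is exactly why $\alpha=2/3$ is chosen: it satisfies $2\alpha^2<1$.

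\textbf{Cost.} This is the part giving the lower bound. The quantum oracle for the random variable $A_d(y_{\le d},N_d)/\mb{P}_d(N_d)$ must be a fixed circuit acting in superposition over all branches $N_d\in\{0,\dots,B_d\}$. Its cost is therefore the maximum over branches, which is attained at $N_d=B_d$: that branch calls $R_{d+1}$ at accuracy $\alpha^{B_d}\asymp\varepsilon$, at cost $C_{d+1}(\Theta(\varepsilon))$. QAMC with $s=\sigma_d=O(1)$ then makes $\Theta(\varepsilon^{-1}\log\varepsilon^{-1})$ calls to this oracle, giving
\[
C_d(\varepsilon)\gtrsim \varepsilon^{-1}\log(\varepsilon^{-1})\,C_{d+1}(c\varepsilon).
\]
Unrolling from $C_D\asymp\varepsilon^{-1}\log\varepsilon^{-1}$ yields $\varepsilon^{-(D-d+1)}\log^{D-d+1}(\varepsilon^{-1})$, which I read as the intended growth in the statement's exponent convention. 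The main obstacle is justifying that this is a genuine lower bound rather than an artefact of one implementation. I would argue it within the stated model: QAMC as used in \cref{cor:qmc} needs a unitary with worst-case gate count, so the max-branch cost is forced. I would also note that $C_{d+1}$ is monotone decreasing in accuracy, so the constant $c$ only affects constants.
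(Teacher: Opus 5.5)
Your proposal is correct and follows essentially the same route as the paper. For the bias, both use backward induction, telescope the importance-weighted differences to $\mb{E}[g_d(y_{\le d},Z_{B_d})\mid y_{<d}]$, and apply the Lipschitz property with the inductive error of $R_{d+1}$ at accuracy $\alpha^{B_d}=\varepsilon/(\sqrt{D}L_0\cdots L_d)$; the paper tracks constants through the squared recursion $\beta_d^2\le L_d^2(\alpha^{2B_d}+\beta_{d+1}^2)$, which is the bookkeeping you defer. For the moment, both sum a geometric series with ratio $\alpha^2/(1-r)=8/9<1$.

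The cost claim is handled the same way too: the paper omits it from its appendix and argues it only in the main text, via the worst branch $N_d=B_d$, giving $C_d(\varepsilon)\gtrsim\varepsilon^{-1}\log(\varepsilon^{-1})\,C_{d+1}(\alpha^{B_d})$ with $\alpha^{B_d}\asymp\varepsilon$. You are right to read the stated exponent as $\varepsilon^{-(D-d+1)}$.
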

We remark that due to the variable time issue we alluded to, the per level cost is as if we always pick the most costly level $N_d=B_d$. Then, $R_d(y_{<d}, \varepsilon)$ demands $\tilde{\Theta}(\varepsilon^{-1})$ samples of $R_d(y_{<d}, \alpha^{B_d})$, and $\alpha^{B_d}\asymp \varepsilon$. Hence, the cost recursion solves to $\tilde{\Theta}(\varepsilon^{-1})$ to the power of number of time-steps remaining, as written in the proposition.

This is contrasted against what the expected cost would look like with the QAMC speedup, where each of the $\tilde{\Theta}(\varepsilon^{-1})$ samples has expected cost that of level $N_d$, averaged for $N_d\sim \mb{P}_d$. We remark that this cost recursion would solve to
\[ \frac{c^{D-d}L_d\dots L_D}{\varepsilon}\log\left(\frac{L_d}{\varepsilon}\right)\prod_{\ell={d+1}}^{D}\log(L_{\ell})=\tilde{O}(\varepsilon^{-1})\]
associated with $R_d(y_{<d}, \varepsilon)$, for some constant $c>0$. This shows that the variable-time issue is the root cause of the failure of the sample complexity bound for Algorithm 4.

\subsection{Deterministic Level Scheduling}
\label{sec:quantum-wcc}
Finally, we give our main quantum algorithm with a fixed schedule for the levels and recover the quadratic speedup with the worst case cost of $\tilde{\Theta}(\varepsilon^{-1})$. This is a direct generalization of \cite{qmlmc} from number of nestings $D=2$ to arbitrary constant $D$.
\begin{algorithm}

\caption{Quantum MLMC Algorithm for RNE}
\label{alg:derand-algo}
 \KwIn{Time step $d\in\{0, 1, \dots, D\}$, history $y_{<d}\in \mb{R}^{d}$, and error parameter $\varepsilon>0$}
 \KwOut{Estimator $R_d(y_{<d}, \varepsilon)$ of $\gamma_d(y_{<d})$ with RMSE $\varepsilon$}
$B_d\gets 2\log_2(L_d/\varepsilon)$

\If{$d=D$}{
	Apply Quantum-Accelerated Monte Carlo (\cref{cor:qmc}) to obtain an estimator $R_D(y_{<D}, \varepsilon)$ of
     \[
         \gamma_D(y_{<D})\coloneqq\mb{E}\left[g_D(y_{\le D})\middle|y_{<D}\right]
     \] with RMSE at most $\varepsilon$ 

    \Return $R_D(y_{<D}, \varepsilon)$ 
}
\Else{
\For{$0\le n\le B_d$}{
Define successive difference
\[
\Delta_d(y_{\le d}, n) \coloneqq g_d\left(y_{\le d}, R_{d+1}(y_{\le d}, 2^{-n/2})\right)
-g_d\left(y_{\le d}, R_{d+1}(y_{\le d}, 2^{-(n-1)/2})\right)
\]

Apply Quantum-Accelerated Monte Carlo (\cref{cor:qmc}) to obtain an estimator $A_d^{(n)}$ of \[\mb{E}\left[\Delta_d(y_{\le d}, n)\middle|y_{<d}\right] \]
with RMSE at most $\varepsilon/3(B_d+1)$
}
\Return $\sum_{n=0}^{B_d}A_d^{(n)}$
}
\end{algorithm}
\begin{proposition}
\label{prop:quantum-wcc}
Under LBL assumptions (\cref{def:lbl}), for each $0\le d\le D$, any error $\varepsilon>0$, and $\pi$-a.e. $y_{<d}$, Algorithm 6 outputs an estimator $R_d(y_{<d}, \varepsilon)$ with MSE
\[ \mb{E}\left[\left(R_d(y_{<d}, \varepsilon)-\gamma_d(y_{<d})\right)^2\middle|y_{<d}\right]\le \varepsilon^2.\]
Moreover, with leading constant in $d$ and $L$ holding uniformly over $\pi$-a.e. $y_{<d}$, the sample complexity is at most
\begin{equation}
\label{eq:wcc-cost}
 O\left(\frac{1}{\varepsilon}\log^{1+3(D-d)}\left(\frac{1}{\varepsilon}\right)\right).
\end{equation}
\end{proposition}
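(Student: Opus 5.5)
We argue by backward induction on $d$. The inductive hypothesis $\mathcal H_{d+1}$ is the full statement of \cref{prop:quantum-wcc} at level $d+1$: for every $\varepsilon'>0$ and $\pi$-a.e. $y_{\le d}$, $R_{d+1}(y_{\le d},\varepsilon')$ has MSE at most $\varepsilon'^2$ and worst-case cost $C_{d+1}(\varepsilon')\lesssim \varepsilon'^{-1}\log^{1+3(D-d-1)}(1/\varepsilon')$, uniformly in the history. For the base case $d=D$, Algorithm 6 is a single call of \cref{cor:qmc} with $s^2=\mb E[g_D^2\mid y_{<D}]$. This gives MSE $\varepsilon^2$ at cost $O(\varepsilon^{-1}\log(1/\varepsilon))$, matching \eqref{eq:wcc-cost}. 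Here I would have to either assume a uniform bound on $s$ over histories (as the phrase ``uniformly over $\pi$-a.e. $y_{<d}$'' suggests), or absorb $s$ into the constants.

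For the inductive step, I decompose the error of $R_d=\sum_{n=0}^{B_d}A_d^{(n)}$. Write $\mu_d^{(n)}:=\mb E[\Delta_d(y_{\le d},n)\mid y_{<d}]$, and interpret $\Delta_d(\cdot,0)$ as just $g_d(y_{\le d},R_{d+1}(y_{\le d},1))$, so that the sum telescopes. Then
\[
\sum_{n=0}^{B_d}\mu_d^{(n)}=\mb E\big[g_d(y_{\le d},R_{d+1}(y_{\le d},2^{-B_d/2}))\mid y_{<d}\big].
\]
The bias against $\gamma_d=\mb E[g_d(y_{\le d},\gamma_{d+1}(y_{\le d}))\mid y_{<d}]$ is therefore at most $L_d\,\mb E|R_{d+1}(y_{\le d},2^{-B_d/2})-\gamma_{d+1}|\le L_d 2^{-B_d/2}$, using Lipschitzness, Jensen and $\mathcal H_{d+1}$. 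With $B_d=2\log_2(L_d/\varepsilon)$ this bias is at most $\varepsilon$ times a constant, and I will adjust the constant in $B_d$ (say $B_d=2\log_2(3L_d/\varepsilon)$) so that it is at most $\varepsilon/3$. The estimators $A_d^{(n)}$ have individual RMSE $\varepsilon/(3(B_d+1))$. By Minkowski (or \cref{lem:second-mom}), the RMSE of their sum around $\sum_n\mu_d^{(n)}$ is at most $\varepsilon/3$. The total RMSE is then at most $2\varepsilon/3\le\varepsilon$.

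For the cost, I need the second-moment bound $s_n$ required by \cref{cor:qmc} for $\Delta_d(y_{\le d},n)$. Inserting $\gamma_{d+1}$ and using Lipschitzness gives
\[
\mb E[\Delta_d^2\mid y_{<d}]\le 2L_d^2\left(2^{-n}+2^{-(n-1)}\right)=6L_d^2 2^{-n},
\]
so $s_n\lesssim 2^{-n/2}$. This is where the small-mean design of MLMC differences pays off. For $n=0$, $s_0$ is $O(1)$ by a bound on $g_d$ at $\gamma_{d+1}$, which I would assume, or alternatively treat level $0$ via a shifted variable. Each sample of $\Delta_d(\cdot,n)$ costs $1+2C_{d+1}(2^{-n/2})\lesssim 2^{n/2}n^{1+3(D-d-1)}$ by $\mathcal H_{d+1}$, and this is worst-case cost, with no variable-time issue. \cref{cor:qmc} at target $\varepsilon/(3(B_d+1))$ needs
\[
O\!\left(\frac{s_nB_d}{\varepsilon}\log\frac{B_d}{\varepsilon}\right)
\]
samples. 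The product is
\[
\lesssim \frac{2^{-n/2}\cdot 2^{n/2}\,B_d\log(1/\varepsilon)\,n^{1+3(D-d-1)}}{\varepsilon}\lesssim \varepsilon^{-1}\log^{2}(1/\varepsilon)\log^{1+3(D-d-1)}(1/\varepsilon),
\]
using $n\le B_d=O(\log(1/\varepsilon))$. Summing over the $B_d+1$ levels adds one more log factor, giving $\varepsilon^{-1}\log^{1+3(D-d)}(1/\varepsilon)$, as claimed. These are the three logs per step: one from the level count, one from the accuracy split in QAMC, and one from the QAMC log.

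The step I expect to be the main obstacle is verifying the hypotheses of \cref{cor:qmc}: having a \emph{known} bound $s_n$ valid uniformly over histories, and ensuring that the ``code of $X$'' for $\Delta_d$ — which internally runs quantum algorithms $R_{d+1}$ with their own measurement outcomes — is a legitimate reversible sampler with a fixed worst-case query count. I would handle the latter by noting that Algorithm 6 has deterministic, input-independent cost at every level, so it can be compiled into a unitary with a fixed number of oracle calls. This is exactly what the derandomization buys us.
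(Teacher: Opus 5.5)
Your proposal is correct and follows the paper's proof essentially step for step: backward induction, telescoping the level means down to a single Lipschitz bias term at level $B_d$, the second-moment bound $\mb{E}[\Delta_d(y_{\le d},n)^2\mid y_{<d}]\le 6L_d^2 2^{-n}$ feeding \cref{cor:qmc}, and the same count of three logarithms per remaining time step. Your adjustments are genuine repairs rather than detours: the paper's $B_d=2\log_2(L_d/\varepsilon)$ gives $2^{-B_d}=\varepsilon^2/L_d^2$, not the claimed $\varepsilon^2/(4L_d^2)$, so a constant inside the logarithm is needed (your $3L_d$, or the $2L_d$ of \cref{sec:app-algo-2}); the paper also leaves implicit three things you make explicit, namely the $n=0$ telescoping convention, the second-moment bound at level $0$, and the uniform second-moment bound in the base case.
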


Now, \cref{thm:main-wcc} follows immediately. We give the proof in full detail since this is our main result, and parameter choices are much less delicate than the classical counterparts. We return to a discussion of why after the proof.

\begin{proof}
We apply backwards induction again. The base case of $d=D$ is clear as \cref{cor:qmc} gives the estimator with RMSE $\varepsilon$ and cost $O(\varepsilon^{-1}\log(\varepsilon^{-1}))$.
Assuming the case of $d+1$, we show the case of $d$. For ease of notation, we assume all expectations are taken conditional on $y_{<d}$.

For MSE, we see that $\mb{E}\left[\Delta_d(y_{\le d}, n)\middle| y_{<d}\right]$ telescopes and
{\begin{align*}
& \mb{E}\left[\left(R_d(y_{<d}, \varepsilon)-\gamma_d(y_{<d})\right)^2\right] 
\\ & \le 2 \mb{E}\left[\left(R_d(y_{<d}, \varepsilon)-\sum_{n=0}^{B_d}\mb{E}\left[\Delta_d(y_{\le d}, n)\right]\right)^2\right]
 +2\mb{E}\left[\left(\gamma_d(y_{<d})-\sum_{n=0}^{B_d}\mb{E}\left[\Delta_d(y_{\le d}, n)\right]\right)^2\right]
\\ & \le 2\mb{E}\left[\left(\sum_{n=0}^{B_d} A_d^{(n)}-\mb{E}\left[\Delta_d(y_{\le d}, n)\right]\right)^2\right] 
 +2 \left(\gamma_d(y_{<d})-\mb{E}\left[g_d\left(y_{\le d}, R_{d+1}\left(y_{\le d}, 2^{-B_d/2}\right)\right)\right]\right)^2
\\ & \le 2(B_d+1)\sum_{n=0}^{B_d}\mb{E}\left[\left(A_d^{(n)}-\mb{E}\left[\Delta_d(y_{\le d}, n)\right]\right)^2\right]
 +2L_d^2 \left(\mb{E}\left[\gamma_{d+1}(y_{\le d})-R_{d+1}(y_{\le d}, 2^{-B_d/2}))\right]\right)^2
\\ & \le \frac{2\varepsilon^2}{9}+2L_d^2\mb{E}\left[\left(\gamma_{d+1}(y_{\le d})-R_{d+1}(y_{\le d}, 2^{-B_d/2}))\right)^2\right]
\\ & \le \varepsilon^2.
\end{align*}}
In the first step, we apply \cref{lem:second-mom} with $n=2$; in the second step, we use the telescoping series; in the third step, we apply \cref{lem:second-mom} with $n=B_d+1$ and rewrite $\gamma_d$ by \cref{eq:gamma} to apply the Lipschitz property of $g_d$; in the fourth step, we use the QAMC guarantee on $A_d^{(n)}$ and Cauchy-Schwarz; and in the last step we apply the inductive hypothesis and recall our choice of $B_d=2\log_2(L_d/\varepsilon)$ to obtain
\begin{align*}
 & \mb{E}\left[\left(\gamma_{d+1}(y_{\le d})-R_{d+1}(y_{\le d}, 2^{-B_d/2}))\right)^2\middle| y_{<d}\right]
\\ & = \mb{E}_{y_d}\left[\mb{E}\left[\left(\gamma_{d+1}(y_{\le d})-R_{d+1}(y_{\le d}, 2^{-B_d/2}))\right)^2\middle| y_{\le d}\right]\right]
\\ & \le \mb{E}_{y_d}\left[2^{-B_d}\right]
\\ & = \frac{\varepsilon^2}{4L_d^2}.
\end{align*}

To bound the cost, we first bound the second moment of the random variable we apply QAMC to: for $\pi$-a.e. $y_{<d}$
{\begin{align*}
& \mb{E}\left[\Delta_d(y_{\le d}, n)^2\right] 
\\ & \le L_d^2\mb{E}\left[ \left(R_{d+1}(y_{\le d}, 2^{-n/2})- R_{d+1}(y_{\le d}, 2^{-n/2})\right)^2\right]
\\ & \le 2L_d^2\mb{E}\Big[ \left(R_{d+1}(y_{\le d}, 2^{-n/2})- \gamma_{d+1}(y_{\le d})\right)^2
 +\left(R_{d+1}(y_{\le d}, 2^{-(n-1)/2})- \gamma_{d+1}(y_{\le d})\right)^2\Big]
\\ & \le 2L_d^2(2^{-n}+2^{1-n})
\\ & = 6L_d^2\cdot 2^{-n}.
\end{align*}}
which is at most $s^2$ for $s=3L_d2^{-n/2}$. In the first step, we apply Lipschitz condition of $g_d$ to \cref{eq:Delta_d}; the second step follows \cref{lem:second-mom} with $n=2$; and the third step follows the inductive hypothesis.

Now, the cost at level $n$ is the sample complexity of QAMC with this second moment bound, times the recursive cost of two calls of $R_{d+1}$ with appropriate accuracy. Therefore, by \cref{cor:qmc}, for $\pi$-a,e. $y_{<d}$, the cost is
{\begin{align*}
C_d(\varepsilon) & \lesssim \sum_{n=0}^{B_d}\frac{s}{\varepsilon/3(B_d+1)}\log \left(\frac{s}{\varepsilon/3(B_d+1)}\right)
\quad  \cdot \left[C_{d+1}(2^{-n/2})+C_{d+1}(2^{-(n-1)/2})\right]
\\ & \lesssim \sum_{n=0}^{B_d} \frac{2^{-n/2}B_d}{\varepsilon}\log\left(\frac{B_d}{\varepsilon}\right)
 \quad \cdot 2^{n/2}\log ^{1+3(D-d-1)}\left(2^{n/2}\right)
\\ & \lesssim \frac{B_d}{\varepsilon}\log\left(\frac{B_d}{\varepsilon}\right)\sum_{n=0}^{B_d}\left(\frac{n}{2}\right)^{1+3(D-d-1)}
\\ & \lesssim \frac{B_d}{\varepsilon}\log\left(\frac{B_d}{\varepsilon}\right) B_d^{3(D-d)-1}
\\ & \lesssim \frac{1}{\varepsilon}\log^{3(D-d)+1}\left(\frac{1}{\varepsilon}\right)
\end{align*}}
where the second step follows the inductive hypothesis and the last step follows $B_d\asymp \log(\varepsilon^{-1})$. 
\end{proof}

As observed in \cref{sec:main-results}, our quantum algorithm curiously also bypasses the need for the arbitrarily small deficiency $\delta$ in both the error and the cost: we control the $L^2$-error (i.e. RMSE) instead of $L^{p_d}$-error for $p_d\in (1, 2)$, and replace the $\varepsilon^{-O(\delta)}$ term in the complexity by an explicit poly-logarithm factor in $\varepsilon^{-1}$. We now explain this simplification.

Classically with MLMC or randomized MLMC, we need to tune the geometric ratio in the scheduling of levels to balance the both cost and variance bounds. Here, as is often the case, both are geometric series themselves with reciprocal ratios, so we cannot make both converge simultaneously. This the reason we need the $\delta$-trick in \cref{thm:algo-1} to forego the variance control, and balance delicate calculations in \cref{lem:numerics} for example. 

With Quantum-Accelerated Monte Carlo (\cref{thm:ko-qmc,cor:qmc}), the per-level costs admits a quadratic improvement, so the geometric ratios of cost and variance series are no longer the reciprocals. Then, we have some slack choose some geometric scheduling of the levels to control the convergence of both series.

\bibliographystyle{amsplain0.bst}
\bibliography{main.bib}
\appendix
\section{Proof of Auxiliary Lemmas}
\label{sec:app-prelim}
\begin{proof}[Proof of \cref{lem:numerics}]
By intermediate value theorem applied at $r_d\to 1^-$ and $r_d\to 0^+$, we can find $r_d$ to make the two quantities equal. At equality, it is strictly smaller than one since the product of the two quantities is $(1-r_d)^{\delta/2^d}<1$ as $r_d>0$.
\end{proof}
\begin{proof}[Proof of \cref{cor:qmc}]
Note, standard deviation $\sigma \le s$ and $(\mb{E}X)^2 \le \mb{E}X^2\le s^2$, so $|\mb{E}X|\le s$.
 By \cref{thm:ko-qmc}, we can find estimator $\tilde{X}$ such that
\[ \mb{P}\left(|\tilde{X}-\mb{E}X|>\frac{\varepsilon}{2}\right)\le \frac{\varepsilon^2}{8s^2}\]
Now, we clip the estimator into $[-s, s]$, since we know the true mean must lie there, i.e. output
\[ \hat{X} = \phi_s(\tilde{X})\quad\text{where}\quad \phi_s(x)\coloneqq \min(s, \max(-s, x))\]
Then, $|\hat{X}-\mb{E}X|\le 2s$ always, and $\mb{P}\left(|\hat{X}-\mb{E}X|>\frac{\varepsilon}{2}\right)\le \mb{P}\left(|\tilde{X}-\mb{E}X|>\frac{\varepsilon}{2}\right)\le \varepsilon^2/8s^2$, so
\begin{align*} 
\mb{E}(\hat{X}-\mb{E}X)^2 & = \mb{E}\left[(\hat{X}-\mb{E}X)^2\middle| |\hat{X}-\mb{E}X|\le \frac{\varepsilon}{2} \right]\cdot \mb{P}\left(|\hat{X}-\mb{E}X|\le \frac{\varepsilon}{2}\right) 
 +  \mb{E}\left[(\hat{X}-\mb{E}X)^2\middle| |\hat{X}-\mb{E}X|> \frac{\varepsilon}{2} \right]\cdot \mb{P}\left(|\hat{X}-\mb{E}X|> \frac{\varepsilon}{2}\right)
\\ &\le \frac{\varepsilon^2}{4} +(2s)^2\cdot \frac{\varepsilon^2}{8s^2}
\\ & \le \varepsilon^2
\end{align*}	
so the RMSE is at most $\varepsilon$. Now, the clipping does not cost sample complexity, so by \cref{thm:ko-qmc}, the sample complexity of $\hat{X}$ is that of $\tilde{X}$, which is at most
\[ O\left(\frac{\sigma}{\varepsilon/2}\log \left(\frac{1}{\varepsilon^2/8s^2}\right)\right)=O\left(\frac{s}{\varepsilon}\log \left(\frac{s}{\varepsilon}\right)\right) \]
\end{proof}

\section{Proof of \cref{prop:algo-2}}
\label{sec:app-algo-2}

We induct backwards on $d\in \{0, 1, \dots, D\}$. Note that $D$ is constant.
The base case of $d=D$ is clear: cost $C_D=1$, the bias $\beta_D=0$ for both cases of $\mb{P}_d$ as $\mu_D(y_{<D})=\mb{E}[g_D(y_{\le D})|y_{<D}]=\gamma_D(y_{<D})$, and $p_D=2$ so $\sigma_D^2 =\on{Var}[g_D(y_{\le D})\mid y_{<D}]<\infty$ by \cref{def:lbl}(3).

Now, assuming the $d+1$ case and we show the $d$ case. 
First, the expected sample complexity is
\begin{equation}
\label{eq:cost-recur}
C_d (\varepsilon) =M_d+M_d  \mb{E}_{N\sim \mb{P}_d}\left[C_{d+1}(2^{-N/2})+C_{d+1}(2^{-(N-1)/2})\right]
\lesssim \varepsilon^{-2(1+\delta/2^{d-1})}\mb{E}_{N\sim\mb{P}_d}\left[(2^{-N/2})^{-2(1+\delta/2^{d+1})}\right]
\end{equation}
by absorbing $O(1)$ constants and applying the inductive hypothesis. To show the cost bound, it suffices to bound the expectation in the last term for both $\mb{P}_d$. For the geometric case 
$\mb{P}_d(n) = r_d(1-r_d)^n$ where recall from \cref{lem:numerics} that we chose $r_d$ to satisfy
\begin{equation}
\rho_d \coloneqq  (1-r_d)2^{1+\delta/2^{d+1}} = (1-r_d)^{-1+\delta/2^d}2^{-1-\delta/2^{d+1}} <1 
\end{equation}
Therefore, we compute that
\begin{align*}
\mb{E}_{N\sim\mb{P}_d}\left[(2^{-N/2})^{-2(1+\delta/2^{d+1})}\right]
& = r_d\sum_{n=0}^\infty (2^{-n/2})^{-2(1+\delta/2^{d+1})}(1-r_d)^n
\\ & =r_d\sum_{n=0}^\infty \left[2^{(1+\delta/2^{d+1})}(1-r_d)\right]^n
\\ & = \frac{r_d}{1-\rho_d}
\\ & \lesssim 1
\end{align*}
as $\rho_d <1$. For the truncated case, we similarly have
\begin{align*}
\mb{E}_{N\sim\mb{P}_d}\left[(2^{-N/2})^{-2(1+\delta/2^{d+1})}\right] & = \left(\sum_{n=0}^{B_d} (1-r_d)^n\right)^{-1} \sum_{n=0}^\infty (2^{-n/2})^{-2(1+\delta/2^{d+1})}(1-r_d)^n
\\ & = \frac{r_d}{1-(1-r_d)^{B_d+1}}\sum_{n=0}^{B_d} \left[2^{(1+\delta/2^{d+1})}(1-r_d)\right]^n
\\ & = \frac{r_d}{1-(1-r_d)^{B_d+1}}\cdot\frac{1-\rho_d^{B_d+1}}{1-\rho_d}
\\ & \le \frac{r_d}{1-\rho_d}
\\ & \lesssim 1
\end{align*}
as $ 1-r_d\le \rho_d < 1$ by \cref{lem:numerics}, so the cost bound holds.

For the bias bound, note that we take $M_d$ iid trials, so for $y_d\sim \pi\mid y_{<d}$, by conditioning on $N\sim \mb{P}_d$
\[ \mb{E}\left[ R_d(y_{<d}, \varepsilon)\middle| y_{<d}\right] = \mb{E}\left[\frac{\Delta_d(y_{\le d}, N)}{\mb{P}_d(N)}\middle|y_{<d}\right] = \sum_{n\in\on{supp}(\mb{P}_d)}\mb{E}\left[\frac{\Delta_d(y_{\le d}, N)}{\mb{P}_d(N)}\middle|y_{<d}, N=n\right]\mb{P}_d(n)=\sum_{n\in\on{supp}(\mb{P}_d)}\mb{E}\left[{\Delta_d(y_{\le d}, n)}\middle| y_{<d}\right]\]
which telescopes by \cref{eq:Delta_d}. For geometric $\mb{P}_d$, we can backwards induct to show Algorithm 2 is in fact unbiased: suppose $\mb{E}\left[R_{d+1}(y_{\le d}, \varepsilon')\middle| y_{\le d}\right]=\gamma_{d+1}(y_{\le d})$ for any $\varepsilon'>0$ and $\pi$-almost every $y_{\le d}$, then for $\pi$-almost every $y_{<d}$
 \begin{align*}
\mb{E}\left[R_{d}(y_{< d}, \varepsilon)\middle| y_{< d}\right]& =\lim_{n\to\infty} \mb{E}  \left[g_d\left(y_{\le d}, R_{d+1}(y_{\le d}, 2^{-n/2})\right)\middle|y_{<d}\right]
 \\ & =\mb{E}  \left[g_d\left(y_{\le d}, \lim_{n\to\infty}R_{d+1}(y_{\le d}, 2^{-n/2})\right)\middle|y_{<d}\right]
\\ & = \mb{E}  \left[g_d\left(y_{\le d}, \gamma_{d+1}(y_{\le d})\right)\middle|y_{<d}\right]
\\&  = \gamma_d(y_{<d})
\end{align*}
as $R_{d+1}(y_{\le d}, 2^{-n/2})$ is an unbiased estimator of $\gamma_{d+1}(y_{\le d})$ with $p_{d+1}$ error $o_{n\to\infty}(1)$ for $\pi\mid y_{<d}$ almost every $y_d$. For truncated $\mb{P}_d$, we similarly have
\[ \mb{E}\left[R_d(y_{<d}, \varepsilon)\middle|y_{<d}\right] = \mb{E}\left[g_d\left(y_{\le d}, R_{d+1}(y_{\le d}, 2^{-B_d/2})\right)\middle|y_{<d}\right]\]
 so bias satisfies
\begin{align*}
\beta_d &  =  \left\Vert  \mb{E}  \left[g_d\left(y_{\le d}, R_{d+1}(y_{\le d}, 2^{-B_d/2})\right)-g_{d}(y_{\le d}, \gamma_{d+1}(y_{\le d}))\middle|y_{<d}\right]\right\Vert_{p_d}
\\ & \le  L_d\left\Vert R_{d+1}\left(y_{\le d}, 2^{-B_d/2}\right)- \gamma_{d+1}(y_{\le d})\right\Vert_{p_d}
\\ & \le  L_d\left\Vert R_{d+1}\left(y_{\le d}, 2^{-B_d/2}\right)- \gamma_{d+1}(y_{\le d})\right\Vert_{p_{d+1}}
\\ & \le  2L_d 2^{-B_d/2}
\end{align*}
where we use the induction and that $L^p$ norms are increasing in $p\ge 1$, and $p_d$ is increasing in $d$. Therefore, the bias bound holds by choosing
\begin{equation}
\label{eq:Bd-def}
B_d = 2\log_2\left(\frac{2L_d}{\varepsilon}\right) =\Theta\left(\log\left(\frac{1}{\varepsilon}\right)\right)
\end{equation}
as $L_i\ge 1$ for all $i$ by \cref{def:lbl}(2). This proves the bias bound for both cases.

Finally, for the moments, note that conditioned on $y_{<d}$, $A_d^{(m)}$ are iid copies of $\Delta_d(y_{\le d}, N)/\mb{P}_d(N)$. Thus, by \cref{thm:vbe} we have

\begin{equation}
\label{eq:moments-recur}
\begin{aligned}
\sigma_d(\varepsilon)^{p_d}& = \Vert R_d(y_{<d}, \varepsilon)-\mu_d(y_{<d})\Vert_{ p_d}^{p_d}
\\ & = \mb{E}_\pi \left\vert \frac{1}{M_d}\sum_{m=1}^{M_d} A_d^{(m)}-\mb{E}A_d^{(m)}\right\vert^{p_d}
\\ & \le 2M_d^{1-p_d} \mb{E}\left\vert \frac{\Delta_d(y_{\le d}, N)-\mb{E}\Delta_d(y_{\le d}, N)}{\mb{P}_d(N)}\right\vert^{p_d}
\\ & \le 2M_d^{1-p_d} \sum_{n=0}^\infty \mb{P}_d(n)^{1-p_d} \Vert \Delta_d(y_{\le d}, n)-\mb{E}\Delta_d(y_{\le d}, n)\Vert_{p_d}^{p_d}
\\ & \le 2M_d^{1-p_d} \sum_{n=0}^\infty \mb{P}_d(n)^{1-p_d} \left(\Vert \Delta_d(y_{\le d}, n)\Vert_{p_d}+\mb{E}|\Delta_d(y_{\le d}, n)|\right)^{p_d}
\\ & \le 8 M_d^{1-p_d} \sum_{n=0}^\infty \mb{P}_d(n)^{1-p_d} \Vert \Delta_d(y_{\le d}, n)\Vert_{p_d}^{p_d}
\end{aligned}
\end{equation}
where in the last step we used $p_d\le $ and $L^p$ norm is increasing for $p\ge 1$. Now
\begin{align*}
\Vert \Delta_d(y_{\le d}, n)\Vert_{p_d} & \le \left\Vert g_d\left(y_{\le d}, R_{d+1}(y_{\le d}, 2^{-n/2})\right)-g_d(y_{\le d}, \gamma_{d+1}(y_{\le d}))\right\Vert_{p_d} 
\\ & \qquad  +\left\Vert g_d(y_{\le d}, \gamma_{d+1}(y_{\le d}))-g_d\left(y_{\le d}, R_{d+1}(y_{\le d}, 2^{-(n-1)/2})\right)\right\Vert_{p_d}
\\ & \le L_d\left(\Vert R_{d+1}(y_{\le d}, 2^{-n/2}) - \gamma_{d+1}(y_{\le d})\Vert_{p_d}+\Vert R_{d+1}(y_{\le d}, 2^{-(n-1)/2}) - \gamma_{d+1}(y_{\le d})\Vert_{p_d}\right) 
\\ & \le L_d\left(\Vert R_{d+1}(y_{\le d}, 2^{-n/2}) - \gamma_{d+1}(y_{\le d})\Vert_{p_{d+1}}+\Vert R_{d+1}(y_{\le d}, 2^{-(n-1)/2}) - \gamma_{d+1}(y_{\le d})\Vert_{p_{d+1}}\right) 
\\ & \le 2L_d\left(2^{-n/2}+2^{-(n-1)/2}\right)
\\ & \le 5L_d 2^{-n/2}
\end{align*}

Together, we have that
\[ \sigma_d(\varepsilon)^{p_d}\le 200L_d^{p_d} M_d^{1-p_d}\sum_{n=0}^\infty \mb{P}_d(n)^{1-p_d} (2^{-n/2})^{p_d} \]
For the geometric case, we have
\begin{align*}
\sigma_d(\varepsilon)^{p_d} & \le 200L_d^{p_d} M_d^{1-p_d}\sum_{n=0}^\infty r_d^{1-p_d}\left[ \left(1-r_d\right)^{1-p_d} 2^{-p_d/2}\right]^{n}
\\ & = 200L_d^{p_d} (r_dM_d)^{1-p_d}\sum_{n=0}^\infty\left[ (1-r_d)^{-1+\delta/2^d}2^{-1-\delta/2^{d+1})}\right]^{n}
\\ & = 200L_d^{p_d} (r_dM_d)^{1-p_d}\frac{1}{1-\rho_d}
\end{align*}
and for the truncated case we similarly have
\begin{align*}
\sigma_d(\varepsilon)^{p_d} & \le 200L_d^{p_d} M_d^{1-p_d}\left(\sum_{n=0}^{B_d} (1-r_d)^n\right)^{p_d-1}\sum_{n=0}^{B_d}\left[ \left(1-r_d\right)^{1-p_d} 2^{-p_d/2}\right]^{n}
\\ & = 200L_d^{p_d} \left(\frac{r_dM_d}{1-(1-r_d)^{B_d+1}}\right)^{1-p_d}\sum_{n=0}^{B_d}\left[ (1-r_d)^{-1+\delta/2^d}2^{-1-\delta/2^{d+1})}\right]^{n}
\\ & = 200L_d^{p_d} \left(\frac{r_dM_d}{1-(1-r_d)^{B_d+1}}\right)^{1-p_d}\left(\frac{1-\rho_d^{B_d+1}}{1-\rho_d}\right)
\\ & \le 200L_d^{p_d} (r_dM_d)^{1-p_d}\frac{1}{1-\rho_d}
\end{align*}
Therefore, to prove the bound, it suffices to choose $M_d$ such that
\[200L_d^{p_d} (r_dM_d)^{1-p_d}\frac{1}{1-\rho_d}\le \varepsilon^{p_d}\]
Absorbing $L_d, r_d, \rho_d$ as constant $O(1)$, it suffices to choose $M_d\gtrsim \varepsilon^{-p_d/(p_d-1)}$ which holds by \cref{eq:BdMd} as $(2-\delta/2^d)/(1-\delta/2^d)\le 2(1+\delta/2^{d-1})$. This completes the induction.

Finally, to put it together, the sample complexity is exactly (2) and the $L^{p_d}$-error of $R_d(y_{<d}, \varepsilon)$ is 
\[ \Vert R_d(y_{<d}, \varepsilon)- \gamma_d(y_{<d})\Vert_{p_d}\le  \Vert R_d(y_{<d}, \varepsilon)- \mu_d(y_{<d})\Vert_{p_d}+\Vert  \mu_d(y_{<d}) -\gamma_d(y_{<d})\Vert_{p_d}\le \sigma_d(\varepsilon)+\beta_d\le 2\varepsilon \]

\section{Proof of \cref{prop:algo-3}}
\label{sec:app-algo-3}

Note that $y_{d}^{(m, n)}$ are iid copies of $y_d\sim \pi\mid y_{<d}$ for each $m$ and $n$, and $A_d^{(n, m)}$ are iid copies of $\Delta_d(y_{\le d}, n)$ for each $m$, so
\[\mu_d(y_{\le d})=\mb{E}R_d(y_{\le d}, \varepsilon) = \sum_{n=0}^{B_d}\frac{1}{M_d^{(n)}}\sum_{m=1}^{M_d^{(n)}} \mb{E}A_d^{(n, m)}  = \sum_{n=0}^{B_d}\Delta _d(y_{\le d}, n)\]
which is the same as Algorithm 2 with truncated $\mb{P}_d$. This proves the bias bound. 

Now, the sample complexity is
\begin{align*}
C_d(\varepsilon) & =\sum_{n=0}^{B_d} M_d^{(n)}+M_d^{(n)} \left[C_d(2^{-n/2})+C_d(2^{-(n-1)/2})\right]
\\ & \le \sum_{n=0}^{B_d} M_d\mb{P}_d(n)[1+C_d(2^{-(n-1)/2}+C_d(2^{-n/2})]
\\ & \lesssim\varepsilon^{-2(1+\delta/2^{d-1})}\mb{E}_{N\sim\mb{P}_d}[1+C_d(2^{-(n-1)/2})+C_d(2^{-n/2})]
\end{align*}
which is the same recursive bound as Algorithm 2, i.e. \cref{eq:cost-recur}, so the cost bound holds.

Finally, for the moments bound, by \cref{thm:vbe} again, we have
\begin{align*}
\sigma_d(\varepsilon)^{p_d} & = \mb{E}\left|\sum_{n=0}^{B_d}  \sum_{m=1}^{M_d^{(n)}} \frac{A_d^{(n, m)}-\mb{E} A_d^{(n, m)}}{M_d^{(n)}}\right|^{p_d}
\\ & \le 2\sum_{n=0}^{B_d}\sum_{m=1}^{M_d^{(n)}} (M_d^{(n)})^{-p_d}\mb{E}\left|A_d^{(n, m)}-\mb{E} A_d^{(n, m)}\right|^{p_d}
\\ & \le 2\sum_{n=0}^{B_d}(M_d^{(n)})^{1-p_d}\Vert \Delta_d(y_{\le d}, n)- \mb{E}\Delta_d(y_{\le d}, n)\Vert_{p_d}^{p_d}
\end{align*}
Compared with \cref{eq:moments-recur}, to obtain the same bound (up to a constant) as Algorithm 2, it suffices to show for every $n$ that the rounding effect on line 10 of Algorithm 3 is negligible, e.g.
\[ (M_d^{(n)})^{1-p_d}=\left\lfloor\mb{P}_d(n)M_d\right\rfloor^{1-p_d}\le 2\left(\mb{P}_d(n)M_d\right)^{1-p_d} \]
This is true provided $\mb{P}_d(n)M_d\ge 1$ for all $n\le B_d$, and so it suffices to show that for our choice of $B_d$ in \cref{eq:Bd-def}, we can choose constant $c$ so that
\[ (1-r_d)^{B_d}M_d\ge 1 \quad\text{where}\quad M_d = c\varepsilon^{-2(1+\delta/2^{d-1})}\] 
Recall $B_d$ from \cref{eq:Bd-def}, and from \cref{lem:numerics} we can solve for $1-r_d$:
\[B_d=2\log_2\left(\frac{2L_d}{\varepsilon}\right)\quad\text{and}\quad 1-r_d = 2^{-(2+\delta/2^d)/(2-\delta/2^d)}\]
Therefore, we have that
\begin{align*}
(1-r_d)^{B_d}M_d & = 2^{-2\left(\log_2\left(\frac{2L_d}{\varepsilon}\right)\right)(2+\delta/2^d)/(2-\delta/2^d)}M_d
\\ & =c\left(\frac{2L_d}{\varepsilon}\right)^{-2(2+\delta/2^d)/(2-\delta/2^d)} \varepsilon^{-2(1+\delta/2^{d-1})}
\\ & \ge \frac{c}{(2L_d)^{2+\delta/2^{d-2}}}\varepsilon^{2[(2+\delta/2^d)/(2-\delta/2^d)-(1+\delta/2^{d-1})]}
\\ & \ge 1
\end{align*}
by choosing $c = (2L_d)^{2+\delta/2^{d-2}}$ and noting $\varepsilon <1$ and 
\[\frac{2+\delta/2^d}{2-\delta/2^d}< 1+\frac{\delta}{2^{d-1}} \] 
This proves the moments bound. The conclusion follows the same proof as \cref{prop:algo-2}.

\section{Proof of \cref{prop:var-time}}
Again, we induct backwards, with base case $d=D$ exactly as \cref{cor:qmc}. Assuming the $d+1$ case, we show the $d$ case. As the variable-time issue has been extensively discussed in the main body of the paper, we omit the cost bound and show the bias and second moment bounds.

For the $\sigma_d^2$ bound, by the Lipschitz condition of $g_d$
\begin{align*}
\sigma_d(y_{<d})^2 & = \mb{E}\left[ \left|\frac{A_d(y_{\le d}, N_d)}{\mb{P}_d(N_d)}\right|^2\middle\vert y_{<d}\right]
\\ & =\sum_{n=0}^{B_d}\frac{1}{\mb{P}_d(n)} \mb{E}\left[\left|A_d(y_{\le d}, n)\right|^2 \middle\vert y_{<d}\right]
\\& =\sum_{n=0}^{B_d}\frac{1}{\mb{P}_d(n)} \mb{E}\left[\left|g_d(y_{\le d}, R_{d+1}(y_{\le d}, \alpha^n))-g_d(y_{\le d}, R_{d+1}(y_{\le d}, \alpha^{n-1}))\right|^2\middle\vert y_{<d}\right]
\\ & =\sum_{n=0}^{B_d}\frac{L_d^2}{\mb{P}_d(n)} \mb{E}\left[\left|R_{d+1}(y_{\le d}, \alpha^{n})-R_{d+1}(y_{\le d}, \alpha^{n-1})\right|^2\middle\vert y_{<d}\right]
\\ & \lesssim \sum_{n=0}^{B_d}\frac{L_d^2}{\mb{P}_d(n)} \mb{E}\left[\left|R_{d+1}(y_{\le d}, \alpha^{n})-\mu_{d+1}(y_{\le d})\right|^2\middle\vert y_{<d}\right]
\\ & \le \sum_{n=0}^{B_d}\frac{
L_d^2 \alpha^{2n} (1-(1-r)^{B_d}) }{r(1-r)^n}
\\ & \lesssim \frac{L_d^2(1-(1-r)^{B_d})(1-\alpha^{2B_d}(1-r)^{-B_d})}{1-\alpha^2(1-r)^{-1}}
\\ & \le \frac{L_d^2}{1-\alpha^2(1-r)^{-1}}
\\ & \lesssim L_d^2
\end{align*}
as ${0<\alpha^2(1-r)^{-1}<1}$ for $r=1/2, \alpha =2/3$.
Next, we compute the bias. By conditioning on $N_d=n$
\[ \mu_d(y_{<d}) = \sum_{n=0}^{B_d}\mb{E}\left[A_d(y_{\le d}, \alpha^n)\middle|y_d\right]=\mb{E}\left[g_d(y_{\le d}, R_d(y_{\le d}, \alpha^{B_d}))\middle|y_d\right]\]
Therefore, by the Lipschitz property, we have
\begin{align*}
  \beta_d^2 & = \mb{E}|\mu_d(y_{<d})-\gamma_d(y_{<d})|^2
  \\ & = \mb{E}_{y_{<d}}\left|\mb{E}\left[g_d(y_{\le d}, R_{d+1}(y_{\le d}, \alpha^{B_d}))-g_d(y_{<d}, \gamma_{d+1}(y_{\le d}))\middle\vert y_{<d}\right]\right|^2 
  \\ & \le L_d^2 \mb{E}_{y_{<d}}\mb{E}\left[\left| R_{d+1}(y_{\le d}, \alpha^{B_d})-\gamma_{d+1}(y_{\le d})\right|^2\middle\vert y_{<d}\right]
  \\ & = L_d^2\mb{E}\left[|\mu_{d+1}(y_{\le d})-\gamma_{d+1}(y_{\le d})|^2
+ |R_{d+1}(y_{\le d}, \alpha^{B_d})-\mu_{d+1}(y_{\le d})|^2\right]
\\ & \le L_d^2 (\alpha^{2B_d}+\beta_{d+1}^2)
\end{align*}
where the last step follows the inductive hypothesis and definition of the QAMC subroutine.
Solving the recursion and recalling our choice of $B_d$, we obtain the following bound
\[\beta_d^2 \le \sum_{\ell\ge d}^DL_d^2\dots L_\ell^2\alpha^{2B_\ell}\le\varepsilon^2.\]

\end{document}